\def \BibTeX{{\rm B\kern-.05em{\sc i\kern-.025em b}\kern-.08em
		T\kern-.1667em\lower.7ex\hbox{E}\kern-.125emX}}
\newtheorem{theorem}{Theorem}
\newtheorem{definition}{Definition}
\newtheorem{remark}{Remark}
\newtheorem{claim}{Claim}
\newcommand{\defeq}{\coloneqq}
\begin{document}
	

\title{Hierarchical Gradient Coding: From Optimal Design to Privacy at Intermediate Nodes}
	
    \author{
 Ali~Gholami,~\IEEEmembership{Student Member,~IEEE,}
 Tayyebeh Jahani-Nezhad,~\IEEEmembership{Member,~IEEE,} \\
 Kai~Wan,~\IEEEmembership{Member,~IEEE,} 
and~Giuseppe Caire,~\IEEEmembership{Fellow,~IEEE}
  \thanks{
A short version of this paper   was presented  at   the 2025 IEEE International Symposium on Information Theory (ISIT)~\cite{11195435}. 
}
\thanks{A.~Gholami,  T.~Jahani-Nezhad, and G.~Caire are with the Electrical Engineering and Computer Science Department, Technische Universit\"at Berlin, 10587 Berlin, Germany (e-mail: \{a.gholami, t.jahani.nezhad, caire\}@tu-berlin.de). 
}
\thanks{
K.~Wan is with the School of Electronic Information and Communications,
Huazhong University of Science and Technology, 430074  Wuhan, China,  (e-mail: kai\_wan@hust.edu.cn). 
}

	}
	
    \maketitle
	
    \begin{abstract}
Gradient coding is a distributed computing technique for computing gradient vectors over large datasets by outsourcing partial computations to multiple workers, typically connected directly to the server. In this work, we investigate gradient coding in a hierarchical setting, where intermediate nodes sit between the server and workers.
        This structure reduces the 
        communication load      received   at the server, which is a  bottleneck in conventional gradient coding systems.  In this paper, the intermediate nodes, referred to as \textit{relays}, process the data received from workers and send the results to the server for the final gradient computation. Our main contribution is deriving the optimal communication-computation trade-off by designing a linear coding scheme, 
        also considering straggling and adversarial nodes among both relays and workers. 
        We propose a coding scheme which 
        achieves both the optimal relay-to-server communication load and the optimal worker-to-relay communication load. 
        We further extend our setting to incorporate privacy by requiring that relays learn no information about the computed partial gradients from the messages they receive. 
        This is achieved by introducing shared randomness among workers, allowing each worker to encode its partial gradients such that the randomness cannot be canceled out  at the relay. Meanwhile, the server can successfully decode the global gradient by eliminating this randomness after receiving the computations of the non-straggling relays. Importantly, this privacy guarantee is achieved without increasing the overall communication load.
    \end{abstract}

    \section{Introduction}
    Distributed machine learning has become a vital approach for training complex models on large-scale datasets \cite{dean2012large,ahmed2013distributed,li2014communication,verbraeken2020survey}. In this approach, either the dataset or the model parameters are distributed across multiple computational nodes that collaboratively train or evaluate the model. A common setup, especially for deep neural networks, involves storing the model parameters on a central server while partitioning the dataset among multiple worker nodes. Each worker node then computes partial gradients of the model using its assigned local data and sends the results to the server, which aggregates them to obtain the global gradient. 
    
   Despite its advantages, distributed machine learning faces several key challenges. These include communication overhead due to frequent transmission of high-dimensional gradient vectors \cite{li2014communication, verbraeken2020survey,lin2017deep,agarwal2022utility,lu2018multi,albasyoni2020optimal,chen2020scalecom}; the presence of stragglers, or slow-performing nodes that delay training \cite{dean2013tail,li2018near,ozfatura2020straggler,kadhe2020communication,massny2022nested,krishnan2022sequential}; the computational and storage costs associated with each node \cite{yan2018storage,pinto2018hoard,malandrino2021toward,gaston2013realistic}; privacy concerns \cite{nodehi2018limited, yu2019lagrange}; and the threat of adversarial nodes that can compromise the integrity of the training process \cite{blanchard2017machine,chen2017distributed}. 

To address communication and straggler issues, Tandon {\it et al.}~\cite{tandon2017gradient} introduced a coded computing framework known as \emph{gradient coding}, which allows the server to recover the global gradient while tolerating a fixed number of stragglers with minimal communication in homogeneous systems where all worker nodes have equal computation power and storage resources. Ye {\it et al.}~\cite{ye2018communication} further analyzed the trade-offs among communication cost, computational load, and straggler tolerance. 
Building upon this, Cao {\it et al.}~\cite{cao2021adaptive} proposed an adaptive gradient coding scheme that achieves exact gradient recovery with minimal communication cost, even when the number of stragglers varies in real-time.
The authors in \cite{jahani2021optimal} extended this line of work to heterogeneous settings, characterizing the optimal communication cost under arbitrary data placement and the presence of stragglers and adversaries, where  a universal polynomial function was proposed to construct an achievable  coding scheme. 
This problem was further generalized in \cite{wan2021distributed, wan2021tradeoff}  to scenarios where a user node requests a linearly separable function of the data, requiring worker nodes to compute multiple linear combinations of the partial gradients. 
In parallel to addressing communication, computation, and straggler issues through gradient coding, other  researches have also focused on mitigating privacy risks and defending against adversarial behavior in distributed training. For instance, secure aggregation protocols in federated learning, where each worker node has its own local dataset, aim to ensure that individual updates from worker nodes remain private even from the server during aggregation \cite{
bonawitz2017practical, bell2020secure, so2021turbo, yang2021lightsecagg,jahani2022swiftagg+,zhao2021information, 9276464, 11026105, 11027469}.



Most studies on gradient coding, e.g., \cite{tandon2017gradient, ye2018communication,cao2021adaptive,jahani2021optimal,wan2021distributed, wan2021tradeoff}, have focused on the conventional server-worker star topology, where the server is connected to each worker through an individual link.  However, this architecture for distributed learning requires extensive resources, particularly at the server node, where high network bandwidth usage can create critical bottlenecks, especially in practical applications. 
To alleviate this issue, researchers have explored \emph{hierarchical} architectures, which introduce intermediate nodes between the server and the workers. \cite{prakash2020hierarchical, krishnan2023hierarchical, zhang2025fundamental, lu2024capacity, egger2023private} studied the hierarchical gradient aggregation problem in the context of secure aggregation, where the workers are connected to some helper nodes, and the helper nodes are connected to the server. However, these works address a different problem than gradient coding, typically assuming that each worker has an individual local dataset with no overlap among the datasets of any group of workers.

In \cite{reisizadeh2019tree, bhattacharya2021improved, reisizadeh2021codedreduce, shah2024tree}, the authors studied the tree structure in gradient coding, where for each intermediate node there is a set of children nodes and a parent node. Each intermediate node has the responsibility of aggregating the partial gradients received from the children nodes and its own individual partial gradient and sends the result to the parent node. However, tree-based gradient coding schemes suffer from a high communication load, as they focus only on minimizing the computation cost of each worker node. A recent study on hierarchical gradient coding was proposed in~\cite{tang2024design}, which focuses solely on minimizing the computation load, without addressing the communication load. 

In this paper, we propose an alternative scheme for hierarchical gradient coding, where  a server is connected to a group of \textit{relay} nodes, with each relay node connected to a group of \textit{workers}. Based on the pre-determined arbitrary heterogeneous data assignment, the workers compute the partial gradients for the datasets assigned to them and send   linearly encoded vectors to their respective relays. The relay nodes can proceed with computation on the data received from the workers and send the results  to the server. The server should recover the sum of partial gradients from the messages received from the relays while coping with stragglers and adversarial nodes both among the relays and the workers. 
Our main contribution on this model is as follows.
\begin{itemize}
    \item We characterize the optimal communication-computation trade-off through the design of a linear coding scheme based on polynomial functions, inspired by \cite{jahani2021optimal}. The proposed hierarchical scheme significantly reduces the bandwidth requirement needed by the server compared to non-hierarchical gradient coding scenarios, and this has been made possible by relays doing computations on the received messages from the workers, rather than just forwarding them to the server. 
    \item  Furthermore, in the extended version of our model
 where privacy against relays is considered, we design the workers’ computations such that each includes a random component. These random components are coordinated across the workers in a cluster so that no linear combination of their outputs at the relay can fully eliminate the randomness. 
 This is because the coefficients of the random parts are designed so that the resulting coefficient matrix has full row rank, which has been made possible by carefully designing the randomness assigned to the workers.
 Meanwhile, the server is still able to cancel out the randomness and recover the global gradient via a decoding process. This coordinated structure is made possible through the use of polynomial coded computing techniques.
\end{itemize}

    \noindent{\bf Notation:}
     For $n \in \mathbb{N}$, the notation $[n]$ represents set $ \{1,2,...,n\}$. Vectors and matrices are represented by boldface lowercase and uppercase letters, respectively. 
      $\mathbb{F}_q$  denotes a finite field of size $q$. The cardinality of set $\mathcal{S}$ is denoted by $|\mathcal{S}|$. 
 

    \section{Problem setting} 
    \subsection{Hierarchical Gradient Coding} \label{sec:problemsetting}
    In a standard supervised machine learning setting, we are given a training dataset $\mathcal{D}=\{(\mathbf{x}_i,y_i)\}_{i=1}^M$, where $\mathbf{x}_i$ 
    denotes the $i$-th input sample and $y_i$ its corresponding label, for some integer $M$. The goal is to learn the parameters $\mathbf{\Omega}$ (e.g., of a neural network) by minimizing the empirical loss  $L(\mathcal{D};\mathbf{\Omega})\defeq\frac
    {1}{|\mathcal{D}|}\sum_{i=1}^{M} L(\mathbf{x}_i,y_i;\mathbf{\Omega})$, which is typically optimized using the gradient descent algorithm. The algorithm starts with an initial estimate $\mathbf{\Omega}^{(0)}$; the parameters are then iteratively updated as
    \begin{align}
    \mathbf{\Omega}^{(t+1)}\defeq\mathbf{\Omega}^{(t)}-\eta \mathbf{g}^{(t)},
    \end{align}
    where $\eta>0$ is the learning rate and  $\mathbf{g}^{(t)}\in \mathbb{F}_q^d$ is the gradient of the loss function at iteration $t$ defined as 
    \begin{align}
        \mathbf{g}^{(t)}\defeq\nabla L(\mathcal{D};\mathbf{\Omega}^{(t)})=\frac{1}{M}\sum_{i=1}^M \nabla L(\mathbf{x}_i,y_i;\mathbf{\Omega}^{(t)}),
    \end{align}
    where $\nabla$ indicates the gradient w.r.t $\mathbf{\Omega}^{(t)}$, and $\mathbb{F}_q$ is a finite field of size $q$.

    \begin{figure*}
    \centering
    \resizebox{0.7\columnwidth}{!}{%
    \begin{tikzpicture} [xscale=0.6, yscale=0.8,
    roundnode/.style={circle, draw=black!60, fill=black!5},
    squarednode/.style={rectangle, draw=black!60, fill=black!5},
    ]
        \node[roundnode,inner sep=0.5pt] (1) at (-1.8,0) {\scalebox{0.8}{$W_{1,1}$}};
        \node[roundnode,inner sep=0.5pt] (2) at (-0.5,0) {\scalebox{0.8}{$W_{1,2}$}};
        \node[text width=0.5cm] at (0.7,0) {$\dots$};
        \node[roundnode,inner sep=0.5pt] (3) at (1.8,0) {\scalebox{0.55}{$W_{1,N_2^{(1)}}$}};
        
        \node[roundnode,inner sep=0.5pt] (4) at (3.7,0) {\scalebox{0.8}{$W_{2,1}$}};
        \node[roundnode,inner sep=0.5pt] (5) at (5,0) {\scalebox{0.8}{$W_{2,2}$}};
        \node[text width=0.5cm] at (6.2,0) {$\dots$};
        \node[roundnode,inner sep=0.5pt] (6) at (7.3,0) {\scalebox{0.55}{$W_{2,N_2^{(2)}}$}};

        \node[text width=0.5cm] at (8.7,-1) {$\dots$};
        
        \node[roundnode,inner sep=0.5pt] (7) at (10,0) {\scalebox{0.65}{$W_{N_1,1}$}};
        \node[roundnode,inner sep=0.5pt] (8) at (11.3,0) {\scalebox{0.65}{$W_{N_1,2}$}};
        \node[text width=0.5cm] at (12.5,0) {$\dots$};
        \node[roundnode,inner sep=0.5pt] (9) at (13.6,0) {\scalebox{0.4}{$W_{N_1,N_2^{(N_1)}}$}};

        \node[squarednode] (10) at (0,-2) {Relay 1};
        \node[squarednode] (11) at (5.9,-2) {Relay 2};
        \node[squarednode] (12) at (11.8,-2) {Relay $N_1$};

        \draw[dashed] (-2.5, 0.5) rectangle (2.5,-2.5);
        \draw[dashed] (3, 0.5) rectangle (8,-2.5);
        \draw[dashed] (9.3, 0.5) rectangle (14.3,-2.5);
        
        \draw[-] (1) -- (10);
        \draw[-] (2) -- (10);
        \draw[-] (3) -- (10);

        \draw[-] (4) -- (11);
        \draw[-] (5) -- (11);
        \draw[-] (6) -- (11);

        \draw[-] (7) -- (12);
        \draw[-] (8) -- (12);
        \draw[-] (9) -- (12);

        \node[squarednode] (13) at (6,-4) {Server};
        
        \draw[-] (10) -- (13);
        \draw[-] (11) -- (13);
        \draw[-] (12) -- (13);
        \node at (-0.25, 0.8) {\scalebox{0.9}{Cluster 1}};
        \node at (5.75, 0.8) {\scalebox{0.9}{Cluster 2}};
        \node at (11.75, 0.8) {\scalebox{0.9}{Cluster $N_1$}};
        
    \end{tikzpicture}
    }
    \caption{Hierarchical gradient coding structure.}
    \label{fig:setting}
    \end{figure*}

    In a distributed learning setting, as illustrated in Fig. \ref{fig:setting},  
    we consider a system comprising a central server and multiple worker nodes.
    The training dataset $\mathcal{D}$ is partitioned into $K\in\mathbb{N}$ equal-sized datasets, denoted as $\mathcal{D}=\{\mathcal{D}_1, ...,\mathcal{D}_K\}$.
    In this setting, there are $N_1$ clusters, where Cluster $n$ contains $N_2^{(n)}$ workers. The workers in each cluster are connected to a relay, and the relays are connected to the server. All links are orthogonal and interference-free. The datasets are assigned to the workers in the data placement phase. Then each worker computes the gradients of the assigned datasets using the current model parameter $\mathbf{\Omega}$. 
    Let $\mathbf{g}_k^{(t)} \in \mathbb{F}_q^d$ denote the partial gradient computed from subset $\mathcal{D}_k$ at iteration $t$. Since we focus on a single iteration of gradient descent, we drop the superscript and simply write $\mathbf{g}_k$.
    The ultimate goal of the server is to compute the aggregated partial gradients, i.e.,  $\mathbf{g}\defeq\sum_{k=1}^K \mathbf{g}_k$. 

    Worker $j$ in Cluster $n$  is denoted by $W_{(n,j)}$. The dataset assignment is pre-fixed; that is, we consider arbitrary heterogeneous data assignment as in~\cite{jahani2021optimal}.
    The set of datasets assigned to this worker is denoted by $\Gamma_{(n,j)}$, where $\Gamma_{(n,j)} \neq \emptyset$ by assumption.
 The set of datasets assigned to Relay $n$ (i.e., Cluster $n$), denoted by $\Gamma_n \defeq \bigcup_{j\in [N_2^{(n)}]}\Gamma_{(n,j)}$, is the union of all datasets assigned to workers in Cluster $n$.
 If a dataset is included in $\Gamma_n$, it is assigned to at least one worker in Cluster $n$.
Let $\mathcal{A}_k$ denote the set of clusters to which dataset $\mathcal{D}_k$ is assigned. Dataset $\mathcal{D}_k$ is considered assigned to a cluster if it is assigned to at least one worker within that cluster. The size of this set is denoted by $r_1^k\defeq|\mathcal{A}_k| \leq N_1$. 
If dataset $\mathcal{D}_k$ is assigned to Cluster $n$, then let $\mathcal{A}_k^{(n)}$ denote the set of workers in that cluster who are assigned $\mathcal{D}_k$. The size of this set is denoted by $r_{2,n}^k \defeq |\mathcal{A}_k^{(n)}| \leq N_2^{(n)}$. 
    We define
    \begin{align} 
        r_1 &\defeq \min_{k\in [K]} r_1^k, \label{eq:r}\\
        r_{2,n} &\defeq \min_{\substack{k\in [K] \\ k:\mathcal{D}_{k}\in\Gamma_{n}}} r_{2,n}^k, \label{eq:r2}
    \end{align}
where $r_1$ represents the minimum number of repetitions of any datasets across clusters, and $r_{2,n}$ represents the minimum number of repetitions of datasets across workers inside Cluster $n$.



Each Worker $W_{(n,j)}$ computes the local partial gradients for datasets in $\Gamma_{(n,j)}$ and applies a linear encoding function
       $ \mathcal{E}_{(n,j)}: (\mathbb{F}_q^d)^{|\Gamma_{(n,j)}|} \rightarrow \mathbb{F}_q^{C_2^{(n)}},$
    where $C_2^{(n)}$ denotes the size of the vector sent by each worker in Cluster  $n$ to its relay. The encoded vector computed by $W_{(n,j)}$ is denoted by $\Tilde{\mathbf{g}}_{(n,j)} \in \mathbb{F}_q^{C_2^{(n)}}$;
    \begin{align} \label{eq:workerenc}
        \Tilde{\mathbf{g}}_{(n,j)} \defeq \mathcal{E}_{(n,j)}(\{\mathbf{g}_k\}_{k: \mathcal{D}_k \in \Gamma_{(n,j)}}).
    \end{align}

    Relay $n$ receives the encoded vectors sent by the workers in its cluster, considering at most $s_2^{(n)} \in \mathbb{N}  < N_2^{(n)}$ \textit{stragglers} and $a_2^{(n)} \in \mathbb{N}  < \frac{N_2^{(n)}}{2} $ \textit{adversarial} 
     workers in Cluster $n$. In this setting, the straggling nodes are those with failed links or very delayed communication, and by adversary, we refer to nodes that may send incorrect computations but do not exhibit any other malicious behavior or collude with others. Then using the encoded vectors received from non-straggling workers---whose indices are denoted by the set $\mathcal{F}_n$, such that $(n,j)\in \mathcal{F}_n$ indicates that $W_{(n,j)}$ is a non-straggling worker---Relay $n$ encodes a new message $\Tilde{\mathbf{g}}_{n} \in \mathbb{F}_q^{C_1}$ by using a linear encoding as
    \begin{align} \label{eq:relayenc}
        \Tilde{\mathbf{g}}_{n} \defeq \mathcal{E}_n(\mathcal{F}_n, \{\Tilde{\mathbf{g}}_{(n,j)}\}_{(n,j)\in \mathcal{F}_n}),
    \end{align}
    where 
      $  \mathcal{E}_n: \mathcal{F}_n \times (\mathbb{F}_q^{C_2^{(n)}})^{|\mathcal{F}_n|} \rightarrow \mathbb{F}_q^{C_1}. $

    Relay $n$ sends the encoded vector $\Tilde{\mathbf{g}}_{n}$ to the server. The server tolerates at most $s_1\in\mathbb{N}$ straggling and $a_1\in\mathbb{N}$ adversarial relays. 
     After receiving the encoded vectors from non-straggling relays, whose indices are denoted by the set $\mathcal{F}$, the server recovers the desired gradient vector as
    \begin{align} \label{eq:serverdec}
        \mathbf{g} = \mathcal{H}(\mathcal{F}, \{\Tilde{\mathbf{g}}_{n}\}_{n \in \mathcal{F}}),
    \end{align}
    where 
       $ \mathcal{H}: \mathcal{F} \times (\mathbb{F}_q^{C_1})^{|\mathcal{F}|} \rightarrow \mathbb{F}_q^d$. 

    \begin{definition} [Achievable communication load tuple]
        A communication tuple $(C_1,C_2^{(1)}, ..., C_2^{(N_1)})$ is achievable 
        for any set of straggler and adversarial nodes with parameters componentwise dominated by $(s_1,s_2^{(1)},s_2^{(2)}, \dots,s_2^{(N_1)})$ and $(a_1,a_2^{(1)}, a_2^{(2)},\dots,a_2^{(N_1)})$, respectively, and data replication parameters $r_1,r_{2,1},r_{2,2},\dots,r_{2,N_1}$, 
        if there exists a coding scheme with dataset assignment satisfying $r_1,r_{2,1},r_{2,2},\dots,r_{2,N_1}$, and encoding functions $(\{\mathcal{E}_{(n,j)}\}_{n\in [N_1], j\in [N_2^{(n)}]}, \{\mathcal{E}_{n}\}_{n\in [N_1]})$ and decoding function $\mathcal{H}$, such that the server can correctly recover the desired gradient vector $\mathbf{g}$. The minimum values of $C_1$ and $C_2^{(n)}$ where $n\in [N_1]$ among all achievable schemes are denoted by $C_1^*$ and ${C_2^{(n)}}^*$, respectively. 
    \end{definition}

    \subsection{Privacy-Preserving Hierarchical Gradient Coding} \label{sec:problemsettingwithprivacy}
    In the proposed hierarchical gradient coding model, the presence of relays introduces a privacy risk. While the server distributes the datasets among workers to proceed with computation, the relays are not supposed to learn any information about the datasets or even linear combinations of the computed partial gradients. However, the server must still be able to recover the aggregated gradient vector using the messages received from non-straggling relays. 
    To ensure the privacy of gradient information from the relays, we assume the presence of shared randomness among the workers. This shared randomness enables each worker to mask its local computation such that the relays cannot infer any information about the partial gradients, while still allowing the server to accurately recover the aggregated gradient from the relays' messages.
    


Prior to the computation phase, there is an \emph{offline} phase in which the workers coordinate to agree upon a shared random matrix $\mathbf{Z} \in \mathbb{F}_q^{K' \times d}$. The rows of $\mathbf{Z}$, denoted by $\mathbf{z}_1, \dots, \mathbf{z}_{K'}$, are i.i.d. vectors with entries uniformly drawn from $\mathbb{F}_q$.

Each worker uses a subset of these vectors during the encoding phase, we refer to them as \emph{assigned} random vectors.  For Worker $W_{n,j}$, the set of the random vectors   assigned to this worker is denoted by $\Gamma'_{(n,j)}$. 
 The collection of all assigned random vectors used in Cluster $n$ is denoted by
 $\Gamma'_{n}=\bigcup_{j=1}^{N_2^{(n)}} \Gamma'_{(n,j)}$. 
Let $\mathcal{A'}_k$ denote the set of clusters to which the random vector $\mathbf{z}_k$ is assigned.  $\mathbf{z}_k$ is assigned to Cluster $n$ if $\mathbf{z}_k \in \Gamma'_n$.  The size of this set is denoted by  ${r'}_1^{k}\defeq|\mathcal{A'}_k| \leq N_1$. 
If the random vector $\mathbf{z}_k$ is assigned to Cluster $n$, then the set of workers in that cluster having the random vector $\mathbf{z}_k$ is denoted by $\mathcal{A'}_k^{(n)}$, and its size is denoted by ${r'}_{2,n}^{k}\defeq|\mathcal{A'}_k^{(n)}| \leq N_2^{(n)}$. We define
\begin{align}
    r'_1 &\defeq \min_{k\in [K']} {r'}_1^k, \\
    r'_{2,n} &\defeq \min_{\substack{k\in [K'] \\ k:\mathbf{z}_{k}\in\Gamma_{n}}} {r'}_{2,n}^k,
\end{align}
where $r'_1$ represents the minimum number of repetitions of any random vector across clusters, and $r'_{2,n}$ represents the minimum number of repetitions of random vectors inside Cluster $n$ across workers inside Cluster $n$. 
 
The encoded vector computed by $W_{(n,j)}$ is denoted by $\Tilde{\mathbf{g}}_{(n,j)}^{\pi} \in \mathbb{F}_q^{{C_2^{\pi}}^{(n)}}$, where
    \begin{align} \label{eq:workerencprivacy}
        \Tilde{\mathbf{g}}_{(n,j)}^{\pi} \defeq \mathcal{E}_{(n,j)}^{\pi}(\{\mathbf{g}_k\}_{k: \mathcal{D}_k \in \Gamma_{(n,j)}},\{\mathbf{z}_k\}_{k: \mathbf{z}_k \in \Gamma'_{(n,j)}}),
    \end{align}
    where $\mathcal{E}_{(n,j)}^{\pi}:(\mathbb{F}_q^d)^{|\Gamma_{(n,j)}|} \times (\mathbb{F}_q^d)^{|\Gamma'_{(n,j)}|} \rightarrow \mathbb{F}_q^{{C_2^{\pi}}^{(n)}},$ is a linear encoding function.
    
    Relay $n$ then encodes a new message $\Tilde{\mathbf{g}}_{n}^{\pi} \in \mathbb{F}_q^{C_1^{\pi}}$ by using a linear encoding as
    \begin{align}
        \Tilde{\mathbf{g}}_{n}^{\pi} \defeq \mathcal{E}_n^{\pi}(\mathcal{F}_n, \{\Tilde{\mathbf{g}}_{(n,j)}^{\pi}\}_{(n,j)\in \mathcal{F}_n}),
    \end{align}
    where 
      $  \mathcal{E}_n^{\pi}: \mathcal{F}_n \times (\mathbb{F}_q^{{C_2^{\pi}}^{(n)}})^{|\mathcal{F}_n|} \rightarrow \mathbb{F}_q^{C_1^{\pi}}$,
    and $\mathcal{F}_n$ is the set of indices $(n,j)$ where $W_{(n,j)}$ is a non-straggling worker.    
After receiving the encoded vectors from non-straggling relays, whose indices are denoted by the set $\mathcal{F}$, the server recovers the desired gradient vector as
    \begin{align} \label{eq:serverdec}
        \mathbf{g} = \mathcal{H}(\mathcal{F}, \{\Tilde{\mathbf{g}}_{n}^{\pi}\}_{n \in \mathcal{F}}),
    \end{align}
    where 
       $ \mathcal{H}': \mathcal{F} \times (\mathbb{F}_q^{C_1^{\pi}})^{|\mathcal{F}|} \rightarrow \mathbb{F}_q^d$. 
The privacy criterion is defined as follows: 
for every Relay $n \in [N_1]$, we require
\begin{align} \label{eq:privacy}
    I(\{\Tilde{\mathbf{g}}_{(n,j)}^{\pi}, (n,j)\in \mathcal{F}_n\}; \{\mathbf{g}_k, \mathcal{D}_k\in \Gamma_n\})=0.
\end{align}
This means that  when Relay $n$ receives all the computations $\Tilde{\mathbf{g}}_{(n,j)}^{\pi}$ from non-straggling workers, it does not gain any information about the partial gradients. 
Note that in the privacy-preserving scenario, we assume there are no adversaries among workers, i.e., $a_2^{(1)}=a_2^{(2)}=\dots=a_2^{(N_1)}=0$.


\begin{definition} [Achievable communication load tuple]
    A communication tuple $(C_1^{\pi},{C_2^{\pi}}^{(1)}, ..., {C_2^{\pi}}^{(N_1)})$ is achievable for any set of straggler and adversarial nodes with parameters $(s_1,s_2^{(1)},s_2^{(2)}, \dots,s_2^{(N_1)})$ and $a_1$, respectively, and data replication parameters $r_1,r_{2,1},r_{2,2},\dots,r_{2,N_1}$, if there exists a coding scheme with dataset assignment satisfying $r_1,r_{2,1},r_{2,2},\dots,r_{2,N_1}$, and encoding functions $(\{\mathcal{E}^{\pi}_{(n,j)}\}_{n\in [N_1], j\in [N_2^{(n)}]}, \{\mathcal{E}^{\pi}_{n}\}_{n\in [N_1]})$ and decoding function $\mathcal{H}'$, such that the server can correctly recover the desired gradient vector $\mathbf{g}$ while satisfying the privacy criterion in \eqref{eq:privacy}. The minimum values of $C_1^{\pi}$ and ${C_2^{\pi}}^{(n)}$ where $n\in [N_1]$ among all achievable schemes are denoted by ${C_1^{\pi}}^*$ and ${C_2^{\pi^{(n)}}}^{*}$, respectively. 
\end{definition}

In the following, we first introduce our proposed hierarchical gradient coding scheme without considering privacy at the relays in Section~\ref{sec:secwithout}, and then extend the scheme to the privacy-preserving setting in Section~\ref{sec:settingprivacy}.

\section{Hierarchical Gradient Coding} \label{sec:secwithout}
In this section, we first characterize the optimal communication–computation trade-off for the hierarchical gradient coding setting described in Subsection~\ref{sec:problemsetting}, and present the main results. We then introduce our proposed scheme through a motivating example in Subsection~\ref{sec:exmplwithout}, followed by the general scheme in Subsection~\ref{sec:generalscheme}.

    \subsection{Main Results} \label{sec:results}
    \begin{theorem} \label{thm:main}
        For the hierarchical gradient coding problem defined in Section \ref{sec:problemsetting}, the minimum communication loads are characterized by
        \begin{align}
            C^*_1 = \frac{d}{m_1}, C^{(n)^*}_2 = \frac{d}{m_1m^{(n)}_2} , \forall n\in [N_1], 
        \end{align}
        where $m_1=r_1-2a_1-s_1$ and $m^{(n)}_2=r_{2,n}-2a^{(n)}_2-s^{(n)}_2$.
    \end{theorem}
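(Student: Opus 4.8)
The plan is to establish Theorem~\ref{thm:main} by proving a matching converse and achievability, exploiting the observation that the hierarchical problem factors into two coupled instances of the single-level gradient-coding problem analyzed in \cite{jahani2021optimal}: an \emph{outer} instance on the relay--server cut (with $N_1$ ``super-workers'', replication $r_1$, $s_1$ stragglers, $a_1$ adversaries) and, for each cluster, an \emph{inner} instance on the worker--relay cut (with $N_2^{(n)}$ workers, replication $r_{2,n}$, $s_2^{(n)}$ stragglers, $a_2^{(n)}$ adversaries). For both directions I would invoke the single-level characterization of \cite{jahani2021optimal}, namely that a gradient-coding instance with output dimension $D$, replication $r$, $s$ stragglers and $a$ adversaries has optimal per-node load $D/(r-2a-s)$, and apply it once per layer.

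For the converse, first I would cut the network at the relay--server interface and treat each relay as a single node emitting $C_1$ symbols; since dataset $\mathcal{D}_{k}$ reaches the server only through the $r_1^{k}\ge r_1$ clusters that store it, the server faces a single-level instance with output dimension $d$ and replication $r_1$, so the single-level converse yields $C_1\ge d/m_1$ with $m_1=r_1-2a_1-s_1$. Next, to bound $C_2^{(n)}$, I would argue that in \emph{any} scheme each non-straggling, non-adversarial relay must convey at least $d/m_1$ symbols of information about $\mathbf g$ (otherwise the $m_1$ useful relays could not jointly span $\mathbb F_q^{d}$ at the server), so relay $n$ must be able to reconstruct at least $d/m_1$ symbols from its workers. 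This turns cluster $n$ into an inner single-level instance with output dimension $d/m_1$ and replication $r_{2,n}$, and the single-level converse gives $C_2^{(n)}\ge (d/m_1)/m_2^{(n)}=d/(m_1 m_2^{(n)})$ with $m_2^{(n)}=r_{2,n}-2a_2^{(n)}-s_2^{(n)}$. The point needing care here is justifying the ``$d/m_1$ incompressible symbols'' claim independently of how large $C_1$ is chosen; this follows from the tightness of the outer cut.

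For achievability I would build a nested (concatenated) polynomial code in the spirit of \cite{jahani2021optimal}. I would split $\mathbf g$ into $m_1$ equal super-blocks of size $d/m_1$, assign distinct evaluation points $\alpha_1,\dots,\alpha_{N_1}$ to the relays, and construct an outer vector polynomial $\mathbf F(x)$ whose coefficients carry these super-blocks so that the relay symbols $\{\mathbf F(\alpha_n)\}_{n}$ form a Reed--Solomon-type code of length $N_1$ that tolerates $a_1$ erroneous and $s_1$ erased relays and lets the server recover $\mathbf g=\sum_k \mathbf g_k$; the replication $r_1$ supplies exactly the $m_1$ degrees of freedom needed. Each relay must output its symbol $\mathbf F(\alpha_n)\in\mathbb F_q^{d/m_1}$, which is a fixed linear combination of the per-dataset super-blocks for $\mathcal D_k\in\Gamma_n$; I would realize this inside cluster $n$ as an inner single-level instance with target $\mathbf F(\alpha_n)$, splitting its $d/m_1$ coordinates into $m_2^{(n)}$ sub-blocks and letting $W_{(n,j)}$ send one evaluation of an inner polynomial of length $d/(m_1 m_2^{(n)})$, so that the relay reconstructs $\mathbf F(\alpha_n)$ by inner Reed--Solomon decoding correcting $a_2^{(n)}$ errors and $s_2^{(n)}$ erasures. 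Choosing $q$ large enough to host all distinct outer and inner evaluation points completes the construction, and the per-node sizes match $C_1^*$ and $C_2^{(n)^*}$.

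The main obstacle I anticipate is the \emph{interface} between the two layers: I must design the inner code so that, from each worker's purely local gradients and a single short evaluation, the relay reconstructs exactly the prescribed outer symbol $\mathbf F(\alpha_n)$ rather than some arbitrary combination, which I expect to handle by merging the outer and inner polynomials into a single bivariate evaluation structure and using the ``universal polynomial'' flexibility of \cite{jahani2021optimal}. A related point is verifying that the two error-correction radii compose without cross-talk, so that worker adversaries are absorbed entirely at the relay (leaving the relay's forwarded symbol clean) while relay adversaries are absorbed at the server; this is what keeps $a_2^{(n)}$ out of $C_1^*$ and $a_1$ out of $C_2^{(n)^*}$, and it is the step where I would be most careful to confirm the claimed separation of the two straggler/adversary budgets.
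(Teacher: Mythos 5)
Your proposal is correct and follows essentially the same route as the paper: the converse uses the identical genie-aided reduction (clusters as super-workers for the outer cut, then a per-cluster reduction to a conventional gradient-coding instance of output dimension $d/m_1$ for the inner cut, invoking \cite{jahani2021optimal} at each layer), and the achievability is the same nested two-layer polynomial construction, which the paper realizes as the Cluster-to-Server and Intra-Cluster encoding polynomials merged into one bivariate-style evaluation structure. The two delicate points you flag---justifying the $d/m_1$ incompressibility at the relay and the clean separation of the two straggler/adversary budgets---are exactly where the paper leans on the linear-encoding assumption and the full-rank correspondence argument, so no gap beyond what the paper itself assumes.
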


    \begin{proof}
        The proof of achievability is presented in Section \ref{sec:generalscheme}. 
        
       For the converse, we use the genie-aided approach. We can model the system as consisting of one server and multiple clusters, where each cluster contains a relay with its children workers. In fact, each cluster can be viewed as a ``super worker'', which stores $r_1$ datasets. 
       This transforms the problem into a conventional gradient coding setting with a server connected to multiple workers. By \cite[Theorem 1]{jahani2021optimal}, we have $C^*_1 \geq  \frac{d}{m_1}$.

To show that $C^{(n)}_2 \ge \frac{d}{m_1m^{(n)}_2}, n\in [N_1]$  for any linear encoding and general decoding function, we now focus on Cluster $n$, and demonstrate that the problem within each cluster reduces to a conventional gradient coding setup.

Since the relay performs a linear encoding of the vectors it receives from the workers, the vector sent from Relay $n$ to the server can be written as 
        
        \begin{align}
            \Tilde{\mathbf{g}}_{n} = \mathbf{A}_{C_1 \times |\Gamma_n|d} 
            \begin{bmatrix}
               \vdots \\
               \mathbf{g}_k \\
               \vdots \\
            \end{bmatrix}_{k\in \Gamma_n}.
        \end{align}
        Now if we divide each partial gradient into $m_1$ equally-sized parts as $\mathbf{g}_k=
            \begin{bmatrix}
               \mathbf{g}_k^1 \\
               \vdots \\
               \mathbf{g}_k^{m_1} \\
            \end{bmatrix}$
        (or the zero-padded version of each partial gradient if $m_1 \nmid d$),
        then 
        \begin{align}
            \Tilde{\mathbf{g}}_{n} = \mathbf{A}_{C_1 \times m_1|\Gamma_i|\frac{d}{m_1}} 
            \begin{bmatrix}
               \vdots \\
               \mathbf{g}_k^1 \\
               \vdots \\
               \mathbf{g}_k^{m_1} \\
               \vdots \\
            \end{bmatrix}_{k\in \Gamma_n}.
        \end{align}

        If we consider $C_1=C_1^*=\frac{d}{m_1}$, then this corresponds to a conventional gradient coding scenario with $m_1|\Gamma_n|$ datasets each of size $\frac{d}{m_1}$. Note that this does not change the value of $r_{2,n}$. In such a scenario, the server  wants to compute the aggregated partial  gradient which is of the form
        \begin{align}
             \mathbf{g} = \mathbf{A}'_{\frac{d}{m_1} \times m_1|\Gamma_n|\frac{d}{m_1}} 
             \begin{bmatrix}
               \vdots \\
               \mathbf{g}_k^1 \\
               \vdots \\
               \mathbf{g}_k^{m_1} \\
               \vdots \\
            \end{bmatrix},
        \end{align}
        where
        \begin{align}
            \mathbf{A}'_{\frac{d}{m_1} \times m_1|\Gamma_n|\frac{d}{m_1}}=
            \begin{bmatrix}
               1,0,...,0, & 1,0,...,0, & ... & 1,0,...,0 \\
               0,1,...,0, & 0,1,...,0, & ... & 0,1,...,0 \\
               \vdots & \vdots & ... & \vdots \\
               0,0,...,1, & 0,0,...,1, & ... & 0,0,...,1
            \end{bmatrix}.
        \end{align}
              Since both $\mathbf{A}$ and $\mathbf{A}'$ are full-rank matrices with the same size, we can say that the correspondence exists. 
              Now, given that Cluster $n$ has $s_2^{(n)}$ stragglers and $a_2^{(n)}$ adversaries, the conventional gradient coding result yields
        \begin{align}
            C_2^{(n)}=\frac{d/m_1}{m_2^{(n)}}=\frac{d}{m_1m_2^{(n)}},
        \end{align}
        where $m_2^{(n)}=r_{2,n}-2a_2^{(n)}-s_2^{(n)}$. For any $C_1 > \frac{d}{m_1}$, the problem in each cluster corresponds to more than a single conventional gradient coding (the problem can be divided into at least one conventional gradient coding). So $C_2^*$ occurs when $C_1^*$ occurs. This completes the proof. 
    \end{proof}

    \begin{remark}
        The linearity of the encoding functions is assumed here as a system constraint and this allows us to prove a tight converse. Obviously, our achievability strategy remains valid and yields an achievable load tuple even removing this constraint, but the proof of a tight converse in this case is elusive. Establishing whether the linear encoding function is actually optimal for the general case remains an open problem for future work. 
    \end{remark}

    \subsection{Motivating Example} \label{sec:exmplwithout}
   We then illustrate the main idea of the achievability scheme through an example. For simplicity, here we consider a homogeneous data assignment among workers, where each worker possesses the same number of datasets. Fig.~\ref{fig:diagram exmp} illustrates a hierarchical setting consisting of $N_1=3$ clusters, each containing $3$ workers, i.e., $N_2^{(1)}=N_2^{(2)}=N_2^{(3)}=3$. The dataset is partitioned into $K=9$ equal-sized datasets, and the data assignment is performed such that  $r_1=2$, and  $r_{2,1}=r_{2,2}=r_{2,3}=2$. Note that $r_1=2$ means the number of repetitions of any dataset across all clusters is at least 2, and $r_{2,1}=r_{2,2}=r_{2,3}=2$ means that the number of repetitions of any dataset existing inside any particular cluster is at least $2$ across workers inside the cluster.
    \begin{figure}
    \centering
    \resizebox{0.6\columnwidth}{!}{%
    \begin{tikzpicture} [xscale=0.6, yscale=0.8,
    roundnode/.style={circle, draw=black!60, fill=black!5},
    squarednode/.style={rectangle, draw=black!60, fill=black!5},
    ]
        \node[roundnode,inner sep=0.5pt] (1) at (0.2,0) {\scalebox{0.8}{$W_{1,1}$}};
        \node[roundnode,inner sep=0.5pt] (2) at (1.5,0) {\scalebox{0.8}{$W_{1,2}$}};
        \node[roundnode,inner sep=0.5pt] (3) at (2.8,0) {\scalebox{0.8}{$W_{1,3}$}};
        \node[roundnode,inner sep=0.5pt] (4) at (4.7,0) {\scalebox{0.8}{$W_{2,1}$}};
        \node[roundnode,inner sep=0.5pt] (5) at (6,0) {\scalebox{0.8}{$W_{2,2}$}};
        \node[roundnode,inner sep=0.5pt] (6) at (7.3,0) {\scalebox{0.8}{$W_{2,3}$}};
        \node[roundnode,inner sep=0.5pt] (7) at (9.2,0) {\scalebox{0.8}{$W_{3,1}$}};
        \node[roundnode,inner sep=0.5pt] (8) at (10.5,0) {\scalebox{0.8}{$W_{3,2}$}};
        \node[roundnode,inner sep=0.5pt] (9) at (11.8,0) {\scalebox{0.8}{$W_{3,3}$}};

        \node[squarednode] (10) at (1.5,-2) {Relay 1};
        \node[squarednode] (11) at (6,-2) {Relay 2};
        \node[squarednode] (12) at (10.5,-2) {Relay 3};

        \draw[dashed] (-0.5, 0.5) rectangle (3.5,-2.5);
        \draw[dashed] (4, 0.5) rectangle (8,-2.5);
        \draw[dashed] (8.5, 0.5) rectangle (12.5,-2.5);
        
        \draw[-] (1) -- (10);
        \draw[-] (2) -- (10);
        \draw[-] (3) -- (10);

        \draw[-] (4) -- (11);
        \draw[-] (5) -- (11);
        \draw[-] (6) -- (11);

        \draw[-] (7) -- (12);
        \draw[-] (8) -- (12);
        \draw[-] (9) -- (12);

        \node[squarednode] (13) at (6,-4) {Server};
        
        \draw[-] (10) -- (13);
        \draw[-] (11) -- (13);
        \draw[-] (12) -- (13);
        \node at (1.5, .8) {\scalebox{0.9}{Cluster 1}};
        \node at (6, 0.8) {\scalebox{0.9}{Cluster 2}};
        \node at (10.5, 0.8) {\scalebox{0.9}{Cluster 3}};
        
    \end{tikzpicture}
    }
    \caption{Structure of the motivating example.}
    \label{fig:diagram exmp}
    \end{figure}
    Consider the dataset assignment that is performed as follows, 
    \begin{align*} 
        \Gamma_{(1,1)} &= \{\mathcal{D}_1, \mathcal{D}_3, \mathcal{D}_7, \mathcal{D}_9\},
        \Gamma_{(1,2)} = \{\mathcal{D}_1, \mathcal{D}_3, \mathcal{D}_4, \mathcal{D}_6\}, 
        \Gamma_{(1,3)} &= \{\mathcal{D}_4, \mathcal{D}_6, \mathcal{D}_7, \mathcal{D}_9\}, \\
        \Gamma_{(2,1)} &= \{\mathcal{D}_1, \mathcal{D}_2, \mathcal{D}_7, \mathcal{D}_8\},
        \Gamma_{(2,2)} = \{\mathcal{D}_1, \mathcal{D}_2, \mathcal{D}_4, \mathcal{D}_5\}, 
        \Gamma_{(2,3)} &= \{\mathcal{D}_4, \mathcal{D}_5, \mathcal{D}_7, \mathcal{D}_8\}, \\
        \Gamma_{(3,1)} &= \{\mathcal{D}_2, \mathcal{D}_3, \mathcal{D}_8, \mathcal{D}_9\},
        \Gamma_{(3,2)} = \{\mathcal{D}_2, \mathcal{D}_3, \mathcal{D}_5, \mathcal{D}_6\}, 
        \Gamma_{(3,3)} &= \{\mathcal{D}_5, \mathcal{D}_6, \mathcal{D}_8, \mathcal{D}_9\},
    \end{align*}
    which leads that
    \begin{align*}
     &   \Gamma_{1} = \{\mathcal{D}_1, \mathcal{D}_3, \mathcal{D}_4, \mathcal{D}_6, \mathcal{D}_7, \mathcal{D}_9\}, \\
       & \Gamma_{2} = \{\mathcal{D}_1, \mathcal{D}_2, \mathcal{D}_4, \mathcal{D}_5, \mathcal{D}_7, \mathcal{D}_8\}, \\
       & \Gamma_{3} = \{\mathcal{D}_2, \mathcal{D}_3, \mathcal{D}_5, \mathcal{D}_6, \mathcal{D}_8, \mathcal{D}_9\}.
    \end{align*}
    In this example, we assume that none of the relays are stragglers or adversaries; however, within each cluster, there is a single straggler, i.e., $a_1=s_1=0, a_2^{(1)}=a_2^{(2)}=a_2^{(3)}=0, s_2^{(1)}=s_2^{(2)}=s_2^{(3)}=1$. We then define $m_1$ and $m_2^{(i)}$ as $m_1 \defeq r_1-2a_1-s_1=2, m_2^{(i)}\defeq r_{2,i}-2a_2^{(i)}-s_2^{(i)}=1$, for $i=1,2,3$.
    Divide each partial gradient  into $m_1=2$ equal-sized parts, represented as $\mathbf{g}_k = (\mathbf{g}_k[1], \mathbf{g}_k[2])$. 
    Let us focus on Cluster 1. For this cluster, since $m_2^{(1)}=1$, each of these parts is further divided into $m_2^{(1)}=1$ equal-sized parts, represented as $\mathbf{g}_k[i] = (\mathbf{g}_k[i,1])$. It is evident that $\mathbf{g}_k[i]=\mathbf{g}_k[i,1]$. The workers in this cluster compute a linear combination of the elements of its computed partial gradients as follows, 
    \begin{align*}
        \Tilde{\mathbf{g}}_{(1,1)} &\defeq \frac{8}{9} \mathbf{g}_{1}[1] + \frac{2}{3} \mathbf{g}_{3}[1]  + \frac{2}{3} \mathbf{g}_{7}[1] + \frac{1}{2} \mathbf{g}_{9}[1] \nonumber \\
        &-\frac{1}{3} \mathbf{g}_{1}[2] - \frac{2}{9} \mathbf{g}_{3}[2] - \frac{1}{4} \mathbf{g}_{7}[2] - \frac{1}{6} \mathbf{g}_{9}[2], \\
        \Tilde{\mathbf{g}}_{(1,2)} &\defeq \frac{4}{9} \mathbf{g}_{1}[1] + \frac{1}{3} \mathbf{g}_{3}[1]  - \frac{4}{3} \mathbf{g}_{4}[1] - \mathbf{g}_{6}[1] \nonumber \\
        &-\frac{1}{6} \mathbf{g}_{1}[2] - \frac{1}{9} \mathbf{g}_{3}[2] + \frac{1}{2} \mathbf{g}_{4}[2] + \frac{1}{3} \mathbf{g}_{6}[2], \\
        \Tilde{\mathbf{g}}_{(1,3)} &\defeq -\frac{8}{3} \mathbf{g}_{4}[1] - 2 \mathbf{g}_{6}[1]  - \frac{2}{3} \mathbf{g}_{7}[1] - \frac{1}{2}\mathbf{g}_{9}[1] \nonumber \\
        &+ \mathbf{g}_{4}[2] + \frac{2}{3} \mathbf{g}_{6}[2] + \frac{1}{4} \mathbf{g}_{7}[2] + \frac{1}{6} \mathbf{g}_{9}[2].
    \end{align*}
    Note that each worker in the cluster computes a vector of size $\frac{d}{2}$. We assume that $W_{1,3}$ is a straggler.
    After receiving computations from $W_{1,1}$ and $W_{1,2}$, Relay 1 interpolates the following polynomial function, referred to as Intra-Cluster  encoding polynomial for Cluster 1, 
    \begin{align*}
        \mathbf{f}_{\text{intraC}}^{(1)}(x_2)&\defeq
         \mathbf{g}_{1}[1] \frac{4(x_2-3)}{-9} + \mathbf{g}_{3}[1] \frac{x_2-3}{-3} + \mathbf{g}_{4}[1] \frac{4(x_2-1)}{-3} \\
        &+ \mathbf{g}_{6}[1] \frac{x_2-1}{-1} + \mathbf{g}_{7}[1] \frac{2(x_2-2)}{-3} + \mathbf{g}_{9}[1] \frac{x_2-2}{-2} \nonumber \\
        &+ \mathbf{g}_{1}[2] \frac{x_2-3}{6} + \mathbf{g}_{3}[2] \frac{x_2-3}{9} + \mathbf{g}_{4}[2] \frac{x_2-1}{2} \\
        &+ \mathbf{g}_{6}[2] \frac{x_2-1}{3} + \mathbf{g}_{7}[2] \frac{x_2-2}{4} + \mathbf{g}_{9}[2] \frac{x_2-2}{6}.
    \end{align*}
    This is possible since the polynomial is of degree 1 and can be recovered by any 2 evaluations and that $\Tilde{\mathbf{g}}_{(1,j)}$ computed by $W_{1,j}$ is equal to $\mathbf{f}_{\text{intraC}}^{(1)}(x_2=j)$. Then, Relay 1 computes $\tilde{\mathbf{g}}_1\defeq\mathbf{f}_{\text{intraC}}^{(1)}(x_2=0)$ and sends the result, which is a vector of size $\frac{d}{2}$, to the server. 
    For Cluster $2$ and Cluster $3$, the corresponding Intra-Cluster polynomials are defined as follows.
    \begin{align*}
        \mathbf{f}_{\text{intraC}}^{(2)}(x_2)&\defeq
        \mathbf{g}_{1}[1] \frac{x_2-3}{-3} + \mathbf{g}_{2}[1] \frac{x_2-3}{1} + \mathbf{g}_{4}[1] \frac{x_2-1}{-1} + \mathbf{g}_{5}[1] \frac{3(x_2-1)}{1} + \mathbf{g}_{7}[1] \frac{x_2-2}{-2} + \mathbf{g}_{8}[1] \frac{3(x_2-2)}{2} \nonumber \\
        &+ \mathbf{g}_{1}[2] \frac{x_2-3}{6} + \mathbf{g}_{2}[2] \frac{x_2-3}{-3} + \mathbf{g}_{4}[2] \frac{x_2-1}{2} \nonumber\\
        &+ \mathbf{g}_{5}[2] \frac{x_2-1}{-1} + \mathbf{g}_{7}[2] \frac{x_2-2}{4} + \mathbf{g}_{8}[2] \frac{x_2-2}{-2}, \\
        \mathbf{f}_{\text{intraC}}^{(3)}(x_2)&\hspace{-1mm}\defeq\hspace{-1mm}
        \mathbf{g}_{2}[1] \frac{8(x_2-3)}{3} + \mathbf{g}_{3}[1] \frac{2(x_2-3)}{3} + \mathbf{g}_{5}[1] \frac{8(x_2-1)}{1} + \mathbf{g}_{6}[1] \frac{2(x_2-1)}{1} + \mathbf{g}_{8}[1] \frac{4(x_2-2)}{1} + \mathbf{g}_{9}[1] \frac{x_2-2}{1} \nonumber \\
        &+ \mathbf{g}_{2}[2] \frac{x_2-3}{-1} + \mathbf{g}_{3}[2] \frac{x_2-3}{-3} + \mathbf{g}_{5}[2] \frac{3(x_2-1)}{1} \nonumber\\
        &+ \mathbf{g}_{6}[2] \frac{x_2-1}{-1} + \mathbf{g}_{8}[2] \frac{3(x_2-2)}{-2} + \mathbf{g}_{9}[2] \frac{x_2-2}{-2}.
    \end{align*}
    Similarly, each non-straggling $W_{n,j}$, for $n=2,3$ and $j=1,2,3$, sends the value of $\mathbf{f}_{\text{intraC}}^{(n)}(x_2=j)$ to its parent relay. Relay $n$ tolerating one straggler, is then able to interpolate $\mathbf{f}_{\text{intraC}}^{(n)}(x_2)$.
    Next, Relay $n$ computes $\Tilde{\mathbf{g}}_{n} \defeq\mathbf{f}_{\text{intraC}}^{(n)}(x_2=0)$ and  sends the result to the server, which is as follows. 
    \begin{align*}
        \Tilde{\mathbf{g}}_{1}&= 
        \frac{4}{3} \mathbf{g}_{1}[1]+\mathbf{g}_{3}[1]+\frac{4}{3} \mathbf{g}_{4}[1]+\mathbf{g}_{6}[1]+\frac{4}{3} \mathbf{g}_{7}[1]+\mathbf{g}_{9}[1] \nonumber \\
        & -\frac{1}{2} \mathbf{g}_{1}[2]-\frac{1}{3} \mathbf{g}_{3}[2]-\frac{1}{2} \mathbf{g}_{4}[2]-\frac{1}{3} \mathbf{g}_{6}[2]-\frac{1}{2} \mathbf{g}_{7}[2]-\frac{1}{3} \mathbf{g}_{9}[2], \\
              \Tilde{\mathbf{g}}_{2} &=
        \mathbf{g}_{1}[1]-3\mathbf{g}_{2}[1]+\mathbf{g}_{4}[1]-3\mathbf{g}_{5}[1]+\mathbf{g}_{7}[1]-3\mathbf{g}_{8}[1] \nonumber \\
        & -\frac{1}{2} \mathbf{g}_{1}[2]+\mathbf{g}_{2}[2]-\frac{1}{2}\mathbf{g}_{4}[2]+\mathbf{g}_{5}[2]-\frac{1}{2}\mathbf{g}_{7}[2]+\mathbf{g}_{8}[2], \\
                \Tilde{\mathbf{g}}_{3}& =
         -8\mathbf{g}_{2}[1]-2\mathbf{g}_{3}[1]-8\mathbf{g}_{5}[1]-2\mathbf{g}_{6}[1]-8\mathbf{g}_{8}[1]-2\mathbf{g}_{9}[1] \nonumber \\
        & +3\mathbf{g}_{2}[2]+\mathbf{g}_{3}[2]+3\mathbf{g}_{5}[2]+\mathbf{g}_{6}[2]+3\mathbf{g}_{8}[2]+\mathbf{g}_{9}[2]. 
      \end{align*}
  
    It is important to highlight the linear structure of the coding, where the coefficients have been determined through designing polynomials. For instance, one can verify that $\Tilde{\mathbf{g}}_{1} = 2\Tilde{\mathbf{g}}_{(1,1)} - \Tilde{\mathbf{g}}_{(1,2)}$. Ultimately, the computation at each relay is a linear combination of the computations of the non-straggling workers connected to it. The coefficients of the worker computations are determined by the Intra-Cluster polynomials.
    
    After receiving these computations, the server recovers the following polynomial, referred to as the Cluster-to-Server encoding polynomial.
    \begin{align*}
        &\mathbf{f}_{\text{C2S}}(x_1)= 
         \left( \mathbf{g}_{1}[1] \frac{x_1-3}{-3} + \mathbf{g}_{2}[1] \frac{x_1-1}{-1} + \mathbf{g}_{3}[1] \frac{x_1-2}{-2} \right. \nonumber \\ 
        &+ \left. \mathbf{g}_{4}[1] \frac{x_1-3}{-3} + \mathbf{g}_{5}[1] \frac{x_1-1}{-1} + \mathbf{g}_{6}[1] \frac{x_1-2}{-2} \right. + \left. \mathbf{g}_{7}[1] \frac{x_1-3}{-3} + \mathbf{g}_{8}[1] \frac{x_1-1}{-1} + \mathbf{g}_{9}[1] \frac{x_1-2}{-2} \right) (x_1+1) \nonumber \\
        &+ \left( \mathbf{g}_{1}[2] \frac{x_1-3}{-4} + \mathbf{g}_{2}[2] \frac{x_1-1}{-2} + \mathbf{g}_{3}[2] \frac{x_1-2}{-3} + \mathbf{g}_{4}[2] \frac{x_1-3}{-4} \right. \nonumber \\ 
        &+ \left. \mathbf{g}_{5}[2] \frac{x_1-1}{-2} + \mathbf{g}_{6}[2] \frac{x_1-2}{-3} + \mathbf{g}_{7}[2] \frac{x_1-3}{-4} + \mathbf{g}_{8}[2] \frac{x_1-1}{-2} \right. + \left. \mathbf{g}_{9}[2] \frac{x_1-2}{-3} \right) (-x_1).
    \end{align*}
    This is possible since $\Tilde{\mathbf{g}}_{n}$, sent by Relay $n$, is equal to the evaluation of $\mathbf{f}_{\text{C2S}}(x_1)$ at the point $x_1=n$. This polynomial is of degree 2 and can be recovered by 3 evaluations. Since in this example, there are no stragglers among the relays, the server can recover this polynomial upon receiving     
    $\Tilde{\mathbf{g}}_{1}$, $\Tilde{\mathbf{g}}_{2}$, and $\Tilde{\mathbf{g}}_{3}$.
    
    Besides, One can verify that $\mathbf{f}_{\text{C2S}}(x_1=0)$ is equal to $\mathbf{g}[1]=\sum_{k=1}^{9}\mathbf{g}_k[1]$, and  $\mathbf{f}_{\text{C2S}}(x_1=-1)$ is equal to $\mathbf{g}[2]=\sum_{k=1}^{9}\mathbf{g}_k[2]$, thereby recovering $\mathbf{g}=(\mathbf{g}[1], \mathbf{g}[2])$. To highlight the linear structure of the messages, one can verify $\mathbf{f}_{\text{C2S}}(x_1=0) = 3\Tilde{\mathbf{g}}_{1} -3\Tilde{\mathbf{g}}_{2} +\Tilde{\mathbf{g}}_{3}$ and $\mathbf{f}_{\text{C2S}}(x_1=-1) = 6\Tilde{\mathbf{g}}_{1} -8\Tilde{\mathbf{g}}_{2} +3\Tilde{\mathbf{g}}_{3}$. The coefficients of the relay computations are determined by the Cluster-to-Server polynomial so that the final gradient computation could be recovered.

Note that since each non-straggling worker and relay sends a linear combination of the $\mathbf{g}_k[i]$s, each of size  $\frac{d}{2}$, the communication costs are $C_1=C_2^{(n)}=\frac{d}{2}, \forall n \in [N_1]$. 

    \subsection{The Proposed Hierarchical Gradient Coding Scheme} \label{sec:generalscheme}
 In this section, we formally describe the proposed gradient coding scheme in a hierarchical setting, which achieves optimal communication load under a heterogeneous data assignment among workers; the number of datasets assigned to each worker is not necessarily the same and can be different based on the computational power of each worker. Building on the universal polynomial function proposed in \cite{jahani2021optimal}, we extend this idea to design two encoding polynomials, referred to as the \emph{Intra-Cluster} and \emph{Cluster-to-Server} encoding polynomials. These polynomials are specifically tailored to the hierarchical setting, enabling efficient communication and computation in this structure.
    
Each partial gradient, which is a vector of size $d$, is partitioned into $m_1$ parts as $\mathbf{g}_k=(\mathbf{g}_k[1], \dots, \mathbf{g}_k[m_1])$, used to design the cluster-to-server polynomial. Furthermore, to design the intra-cluster polynomial in Cluster $n$, each part is further divided into $m_2^{(n)}$ subparts, represented as $\mathbf{g}_k[i]=(\mathbf{g}_k[i,1], \dots, \mathbf{g}_k[i,m_2^{(n)}])$. Consequently, each partial gradient can be expressed by $\mathbf{g}_k=(\mathbf{g}_k[1,1], ..., \mathbf{g}_k[m_1,m_2^{(n)}])$, where $m_1 \defeq r_1-2a_1-s_1$ and $m_2^{(n)} \defeq r_{2,n}-2a_2^{(n)}-s_2^{(n)}$. If $m_1m_2^{(n)} \nmid d$, we can zero pad the partial gradient.
   

    \subsubsection{Design of the Encoding Polynomials}
The encoding process is designed in two layers: the Cluster-to-Server part and the Intra-Cluster part. The first layer specifies the computations to be carried out by the relays, while the second layer defines the computations to be performed by the workers. These two polynomials are interconnected and are designed based on the system parameters and the data assignment.
We begin by describing the design of the Cluster-to-Server encoding polynomial below.\\    
\noindent\textbf{Design of the Cluster-to-Server Encoding: }  Let $\alpha_1^{(1)}, ..., \alpha_{N_1}^{(1)}$ and $\beta_1^{(1)}, ..., \beta_{m_1}^{(1)}$ be $N_1+m_1$ distinct values  chosen from $\mathbb{F}_q$. We aim to find a polynomial function, denoted by $\mathbf{f}_{\text{C2S}}(x_1)$, which has the following properties:
    \begin{enumerate}[leftmargin=*, noitemsep]
        \item  $\mathbf{f}_{\text{C2S}}(\alpha_n^{(1)})$ is a function of the datasets in $\Gamma_n$.  Here, $\alpha_n^{(1)}$ represents the point where $\mathbf{f}_{\text{C2S}}(x_1)$ is evaluated by Relay $n$. Since Relay $n$ has access only to the datasets in $\Gamma_n$, its computations are limited to those datasets. During the scheme, the computed evaluation by Relay $n$ is sent to the server.
        \item  $\mathbf{f}_{\text{C2S}}(x_1)$ is a polynomial of degree $N_1-2a_1-s_1-1$. This ensures that the server can recover the polynomial using $N_1-2a_1-s_1$ evaluations, even in the presence of $s_1$ straggler relays whose evaluations are not received and $a_1$ adversarial relays that may provide incorrect evaluations. This is analogous to a Reed–Solomon code of length $N_1$ and dimension $N_1 - 2a_1 - s_1$, which can jointly correct up to $a_1$ errors and $s_1$ erasures.\cite{lin2001error}.
        \item At point $\beta_i^{(1)}$, $\mathbf{f}_{\text{C2S}}(x_1)$ satisfies the following condition:
        \begin{align*}
           \mathbf{f}_{\text{C2S}}(\beta_i^{(1)}) = \sum_{k=1}^K \mathbf{g}_k[i] = \mathbf{g}[i].
        \end{align*}
      When the server recovers $\mathbf{f}_{\text{C2S}}(x_1)$, it computes the  $i$-th  element of the desired gradient by evaluating the polynomial at $\beta_i^{(1)}$, thereby recovering the entire gradient vector.
    \end{enumerate}
 


  In order to design $\mathbf{f}_{\text{C2S}}(x_1)$,  we first define the polynomial $\mathbf{p}_{l_1}(x_1)$ as follows,
    \begin{align}
        \mathbf{p}_{l_1}(x_1)=\sum_{k=1}^{K} \mathbf{g}_{k}[l_1] \prod^{N_1}_{\substack{i=1 \\ i:\mathcal{D}_{k}\notin\Gamma_{i}}} \frac{x_1-\alpha^{(1)}_{i}}{\beta^{(1)}_{l_1}-\alpha^{(1)}_{i}}.
    \end{align}
    The property of this polynomial is that $\mathbf{p}_{l_1}(\beta_{l_1}^{(1)})=\sum_{k=1}^K \mathbf{g}_k[l_1]$, and  $\mathbf{p}_{l_1}(\alpha_{n}^{(1)})$ depends only on the partial gradients of datasets in $\Gamma_n$.
   The polynomial $\mathbf{f}_{\text{C2S}}(x_1)$ with the mentioned properties is created as follows.
    \begin{align} \label{eq:outer}
        \mathbf{f}_{\text{C2S}}(x_1)=\sum_{l_1=1}^{m_1} \bigg(\mathbf{p}_{l_1}(x_1)  \prod^{m_1}_{\substack{u_1=1 \\ u_1 \neq l_1}} \frac{x_{1}-\beta^{(1)}_{u_1}}{\beta^{(1)}_{l_1}-\beta^{(1)}_{u_1}}\bigg).
    \end{align}
    
    In the proposed scheme, Relay $n$ sends the evaluation of  $\mathbf{f}_{\text{C2S}}(x_1)$ at point $x_1=\alpha_n^{(1)}$ to the server. More precisely, 
    \begin{align*} 
        \Tilde{\mathbf{g}}_n\defeq
        \sum_{l_1=1}^{m_1} \left(\sum_{k=1}^{K} \mathbf{g}_{k}[l_1] \prod^{N_1}_{\substack{i=1 \\ i:\mathcal{D}_{k}\notin\Gamma_{i}}} \frac{\alpha_n^{(1)}-\alpha^{(1)}_{i}}{\beta^{(1)}_{l_1}-\alpha^{(1)}_{i}} \right)  \prod^{m_1}_{\substack{u_1=1 \\ u_1 \neq l_1}} \frac{\alpha_n^{(1)}-\beta^{(1)}_{u_1}}{\beta^{(1)}_{l_1}-\beta^{(1)}_{u_1}},
    \end{align*}
    is sent to the server by Relay $n$ where $\mathbf{g}_{k}[l_1] = (\mathbf{g}_{k}[l_1,1],...,\mathbf{g}_{k}[l_1,m_2^{(n)}])$, thereby $\Tilde{\mathbf{g}}_n = (\Tilde{\mathbf{g}}_n[1],...,\Tilde{\mathbf{g}}_n[m_2^{(n)}])$. Since this is a linear combination of $\mathbf{g}_{k}[l_1]$s of size $\frac{d}{m_1}$, the relay to-server communication cost would be $C_1=\frac{d}{m_1}$. Note that $\Tilde{\mathbf{g}}_n$ is a linear combination of the partitions of the partial gradients of datasets in $\Gamma_n$.

 \noindent   \textbf{Design of the Intra-Cluster Encoding: } For each Cluster $n$, we need a separate Intra-Cluster encoding polynomial accustomed to the corresponding Cluster-to-Server encoding and $\Tilde{\mathbf{g}}_n$. Let $\alpha_1^{(2)}, ..., \alpha_{N_2^{(n)}}^{(2)}$ and $\beta_1^{(2)}, ..., \beta_{m_2^{(n)}}^{(2)}$ be $N_2^{(n)}+m_2^{(n)}$ distinct values chosen from $\mathbb{F}_q$. We aim to find a polynomial function, denoted by  $\mathbf{f}_{\text{intraC}}^{(n)}(x_2)$, which has the following properties:
\begin{enumerate}[leftmargin=*, noitemsep]
    \item  $\mathbf{f}_{\text{intraC}}^{(n)}(\alpha_j^{(2)})$ is a function of datasets in $\Gamma_{(n,j)}$.  Specifically, $\alpha_j^{(2)}$ represents the point at which $\mathbf{f}_{\text{intraC}}^{(n)}(x_2)$ is evaluated by $W_{(n,j)}$ and then sent to Relay $n$. $W_{(n,j)}$ can only perform computations on the datasets within $\Gamma_{(n,j)}$. 
    \item  $\mathbf{f}_{\text{intraC}}^{(n)}(x_2)$ is a polynomial of degree $N_2^{(n)}-2a_2^{(n)}-s_2^{(n)}-1$.  This degree is chosen to ensure that the polynomial can be recovered from the evaluations of the workers in the cluster. Since $s_2^{(n)}$ workers are stragglers whose evaluations are not received by the relay, and $a_2^{(n)}$ workers may behave adversarially by sending incorrect evaluations, it is necessary for the relay to recover the polynomial using  $N_2^{(n)}-2a_2^{(n)}-s_2^{(n)}$ evaluations. 
    \item  At point $\beta_j^{(2)}$, $\mathbf{f}_{\text{intraC}}^{(n)}(x_2)$ satisfies the following condition:
    \begin{align*}
     \mathbf{f}_{\text{intraC}}^{(n)}(\beta_j^{(2)}) = \tilde{\mathbf{g}}_n[j].   
    \end{align*}
   In other words, after Relay $n$ recovers
 $\mathbf{f}_{\text{intraC}}^{(n)}(x_2)$, it calculates the $j$-th element of $\tilde{\mathbf{g}}_n$  by evaluating the polynomial at $\beta_j^{(2)}$, thereby computing the entire vector $\tilde{\mathbf{g}}_n$.
\end{enumerate}



    We first define the polynomial $\mathbf{p}_{l_2}^{(n)}(x_2)$ as follows.
    \begin{align} \label{eq:p}
        &\mathbf{p}_{l_2}^{(n)}(x_2)= \sum_{l_1=1}^{m_1}\hspace{-1mm}\left(\sum_{k=1}^{K} \mathbf{g}_{k}[l_1,l_2]\prod^{N_1}_{\substack{i=1 \\ i:\mathcal{D}_{k}\notin\Gamma_{i}}} \frac{\alpha_n^{(1)}-\alpha^{(1)}_{i}}{\beta^{(1)}_{l_1}-\alpha^{(1)}_{i}}
        \prod^{N_2^{(n)}}_{\substack{j=1 \\ j:\mathcal{D}_{k}\notin\Gamma_{(n,j)}}} \hspace{-2mm}\frac{x_2-\alpha^{(2)}_{j}}{\beta^{(2)}_{l_2}-\alpha^{(2)}_{j}} \right) \prod^{m_1}_{\substack{u_1=1 \\ u_1 \neq l_1}} \frac{\alpha_1^{(1)}-\beta^{(1)}_{u_1}}{\beta^{(1)}_{l_1}-\beta^{(1)}_{u_1}}.
    \end{align}
    The property of this polynomial is that $\mathbf{p}_{l_2}^{(n)}(\beta_{l_2}^{(2)})= \Tilde{\mathbf{g}}_n[l_2]$, and $\mathbf{p}_{l_2}^{(n)}(\alpha_{j}^{(2)})$ depends only on the partial gradients of datasets in $\Gamma_{(n,j)}$. The Intra-Cluster encoding polynomial for Cluster $n$, denoted by $ \mathbf{f}_{\text{intraC}}^{(n)}(x_2)$, is defined as follows. 
    \begin{align}
        \mathbf{f}_{\text{intraC}}^{(n)}(x_2) =
        \sum_{l_2=1}^{m_2^{(n)}} \mathbf{p}_{l_2}^{(n)}(x_2) \prod^{m_2^{(n)}}_{\substack{u_2=1 \\ u_2 \neq l_2}} \frac{x_{2}-\beta^{(2)}_{u_2}}{\beta^{(2)}_{l_2}-\beta^{(2)}_{u_2}}.
    \end{align}
    In the proposed scheme, the task of $W_{(n,j)}$ is to calculate 
    $\mathbf{f}_{\text{intraC}}^{(n)}(\alpha_j^{(2)})$, i.e.,
    \begin{align} \label{eq:workertask}
        \Tilde{\mathbf{g}}_{n,j} \defeq \sum_{l_2=1}^{m_2^{(n)}} \mathbf{p}_{l_2}^{(n)}(\alpha_j^{(2)}) \prod^{m_2^{(n)}}_{\substack{u_2=1 \\ u_2 \neq l_2}} \frac{\alpha_j^{(2)}-\beta^{(2)}_{u_2}}{\beta^{(2)}_{l_2}-\beta^{(2)}_{u_2}},
    \end{align}
and send the result to Relay $n$. Since this is a linear combination of $\mathbf{g}_{k}[l_1,l_2]$s of size $\frac{d}{m_1m_2^{(n)}}$, the worker-to-relay communication cost would be $C_2^{(n)}=\frac{d}{m_1m_2^{(n)}}$.
\subsubsection{Hierarchical Gradient Coding: Step-by-Step Algorithm}
   With the provided description, the proposed scheme proceeds as follows:
\begin{itemize}
    \item \textbf{Worker-to-Relay Transmission.} $W_{(n,j)}$ computes  $\Tilde{\mathbf{g}}_{(n,j)}$ as defined in \eqref{eq:workertask}, and sends the result to Relay $n$. 
    \item \textbf{Calculating Intra-Cluster Polynomial by the Relay.} Upon receiving $\Tilde{\mathbf{g}}_{(n,j)}$ from non-straggling workers, Relay $n$ recovers $\mathbf{f}_{\text{intraC}}^{(n)}(x_2)$, which is of degree $N_2^{(n)}-2a_2^{(n)}-s_2^{(n)}-1$. 
    \item \textbf{Relay-to-Server Transmission.} Relay $n$,  utilizing the third property of the Intra-Cluster polynomial, computes $\Tilde{\mathbf{g}}_{n} = (\Tilde{\mathbf{g}}_{n}[1],...,\Tilde{\mathbf{g}}_{n}[m_2^{(n)}])$ and sends it to the server.
    \item  \textbf{Calculating Cluster-to-Server Polynomial by the Server.} The server, having received $\Tilde{\mathbf{g}}_{n}$ from non-straggling relays, recovers $\mathbf{f}_{\text{C2S}}(x_1)$,  which is of degree  $N_1-2a_1-s_1-1$. 
    \item \textbf{Final Gradient Recovery.}  Using the third property of the Cluster-to-Server polynomial, the server calculates the aggregated partial gradients ${\mathbf{g}} = ({\mathbf{g}}[1],...,{\mathbf{g}}[m_1])$.
    
\end{itemize}
  

    \begin{remark}
In the proposed scheme, the optimal communication loads are achieved simultaneously through the use of polynomial coding techniques in both ``the workers and the relays". In fact, the relays do not only forward the received vectors; instead, they serve as intermediate nodes that perform computations on the information received from the workers.
    \end{remark}

    \begin{remark}
        Let us compare the hierarchical setting to a non-hierarchical setting with the same parameters. For simplicity of comparison, we assume the parameters of each cluster in the hierarchical setting is the same; all having $s_2$ stragglers and $N_2$ workers. This would lead to a total of $N_1s_2$ straggling workers. Using the result in \cite{jahani2021optimal}, the total communication load received by the server from all workers in the non-hierarchical scheme with the same parameters would be $\frac{N_1N_2d}{r_T-N_1s_2}$, where $r_T$ denotes the minimum number of copies of each dataset among workers. However, in the proposed hierarchical scheme, the load received by the server (considering no straggling and no adversarial relays) would be $\frac{N_1d}{r_1}$. Since $r_T$ would be at most $r_1N_2$,  it follows that the total load in the hierarchical setting is less than that in the non-hierarchical setting, highlighting the advantage of the hierarchical structure with respect to bandwidth limitations at the server. 
    \end{remark}
     \begin{remark}
        The result presented in \cite{tang2024design} is a special case of our proposed scheme, specifically when  $r_1=s_e+1$ and $r_{2,n}=s_w+1$ for each Cluster $n$, without considering adversarial nodes. Here $s_e$ and $s_w$
  represent the number of stragglers among relays and in each cluster, respectively, as defined
  in \cite{tang2024design}. Note that, in this case, both the relay to server and the worker to relay communication loads will be  $C_1=C_{2}^{(n)}=d, \forall n \in [N_1]$, matching the result in \cite{tang2024design}.
    \end{remark}

\section{Privacy-Preserving Hierarchical Gradient Coding} \label{sec:settingprivacy}

In this section, we present our proposed privacy-preserving hierarchical gradient coding scheme based on privacy defined in Subsection \ref{sec:problemsettingwithprivacy}, which ensures that the computed partial gradients remain private from the relays. We first outline the main results in Subsection~\ref{sec:resultsprivacy}. The scheme is introduced through an example in Subsection~\ref{sec:exmpprivacy}, followed by a detailed description of the general scheme in Subsection~\ref{sec:achprivacy}. Finally, the privacy proof is provided in Subsection~\ref{sec:proofprivacy}.


\subsection{Main Results} \label{sec:resultsprivacy}
    \begin{theorem}
        For the privacy-preserving hierarchical gradient coding problem, as defined in Subsection \ref{sec:problemsettingwithprivacy}, and for values $r_1 \in [s_1+1:N_1-1]$, $r_{2,n}=s_2^{(n)}+1, \forall n\in [N_1]$, and $a_2^{(n)}=0, \forall n\in [N_1]$, the minimum communication loads are characterized by
        \begin{align}
            {C_1^{\pi}}^* = {C_2^{\pi^{(n)}}}^{*} = \frac{d}{m_1},
        \end{align}
        where $m_1=r_1-2a_1-s_1$.
    \end{theorem}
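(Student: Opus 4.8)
The plan is to prove the two bounds separately, with the converse following almost immediately from Theorem~\ref{thm:main} and the achievability requiring a \emph{randomized} version of the two-layer polynomial construction of Section~\ref{sec:generalscheme}.

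For the converse, I would observe that privacy is only an additional constraint layered on top of the ordinary hierarchical problem. Concretely, for any privacy-preserving scheme and any fixed realization $\mathbf{Z}=\mathbf{z}$ of the shared randomness, the induced worker maps $\mathcal{E}_{(n,j)}^{\pi}(\cdot,\mathbf{z})$ are ordinary linear encodings that still let the server recover $\mathbf{g}$, so they constitute a feasible (non-private) scheme with identical loads. Since $a_2^{(n)}=0$ and $r_{2,n}=s_2^{(n)}+1$ force $m_2^{(n)}=r_{2,n}-2a_2^{(n)}-s_2^{(n)}=1$, Theorem~\ref{thm:main} gives $C_1^{*}=\frac{d}{m_1}$ and $C_2^{(n)*}=\frac{d}{m_1 m_2^{(n)}}=\frac{d}{m_1}$; hence ${C_1^{\pi}}^{*}\ge \frac{d}{m_1}$ and ${C_2^{\pi^{(n)}}}^{*}\ge \frac{d}{m_1}$.

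For achievability I would match these by randomizing the scheme of Section~\ref{sec:generalscheme}. Because $m_2^{(n)}=1$, each worker (resp. relay) message is a single evaluation of the intra-cluster (resp. cluster-to-server) polynomial and already has size $\frac{d}{m_1}$, so it suffices to inject randomness without raising any polynomial degree. First I would add a \emph{server-cancelable} mask to the cluster-to-server layer: let $\mathbf{h}(x_1)$ be a random polynomial of degree $N_1-r_1-1$ whose vector coefficients are drawn from the shared rows of $\mathbf{Z}$, and replace $\mathbf{f}_{\text{C2S}}$ by $\mathbf{f}_{\text{C2S}}(x_1)+\mathbf{h}(x_1)\prod_{i=1}^{m_1}(x_1-\beta_i^{(1)})$. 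The factor $\prod_{i=1}^{m_1}(x_1-\beta_i^{(1)})$ makes the mask vanish at every gradient-recovery point, so the server still obtains $\mathbf{g}[i]$ exactly; the degree is preserved because $(N_1-r_1-1)+m_1=N_1-2a_1-s_1-1$; and the hypothesis $r_1\le N_1-1$ is precisely what guarantees $\mathbf{h}\not\equiv 0$, i.e. a genuinely nonzero mask on each relay output $\tilde{\mathbf{g}}_n^{\pi}=\mathbf{f}_{\text{C2S}}^{\pi}(\alpha_n^{(1)})$. At the intra-cluster layer I would then add to each $\mathbf{f}_{\text{intraC}}^{(n)}$ a random polynomial $\mathbf{r}^{(n)}(x_2)$ of the same degree $N_2^{(n)}-s_2^{(n)}-1$, constrained so that $\mathbf{r}^{(n)}(\beta_1^{(2)})$ equals the coordinated mask $\mathbf{h}(\alpha_n^{(1)})\prod_{i=1}^{m_1}(\alpha_n^{(1)}-\beta_i^{(1)})$ while its remaining coefficients are fresh cluster-local randomness; assigning the random rows of $\mathbf{Z}$ to the workers as \emph{pseudo-datasets} with replication $r'_{2,n}\ge r_{2,n}$ keeps each worker message computable from its own assigned vectors and of size $\frac{d}{m_1}$.

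The privacy claim \eqref{eq:privacy} then reduces to showing that, for each relay $n$, the stacked worker messages equal a gradient-dependent term plus $\mathbf{R}_n\mathbf{z}$ where $\mathbf{R}_n$ has full row rank; equivalently, that the added polynomial $\mathbf{r}^{(n)}$ is \emph{marginally} uniform over all degree-$(N_2^{(n)}-s_2^{(n)}-1)$ polynomials, which makes the masked intra-cluster polynomial uniform and hence independent of the gradients by a one-time-pad argument. I expect the main obstacle to be exactly this full-rank/uniformity step under the coupling constraint: the coefficient $\mathbf{r}^{(n)}(\beta_1^{(2)})$ is not free but tied to the shared $\mathbf{h}$ so that it cancels at the server, so one must verify that this tied coefficient is still marginally uniform and independent of the free ones for each individual relay (cross-relay correlation through $\mathbf{h}$ is harmless, since \eqref{eq:privacy} is a per-relay statement). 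Making this precise amounts to a non-singularity argument for the structured, Vandermonde-type coefficient matrix induced by the evaluation points $\{\alpha_j^{(2)}\}$ and $\beta_1^{(2)}$ together with the random-vector placement, which is where the careful design of the assigned randomness enters.
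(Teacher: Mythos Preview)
Your proposal is correct and follows essentially the same approach as the paper: the converse is identical (privacy is an extra constraint, so Theorem~\ref{thm:main} with $m_2^{(n)}=1$ gives the lower bounds), and the achievability adds a random mask of degree $N_1-r_1-1$ times $\prod_{i}(x_1-\beta_i^{(1)})$ to $\mathbf{f}_{\text{C2S}}$ and propagates it into the intra-cluster polynomials, with privacy reducing to a full-row-rank check on the random-coefficient matrix. The paper resolves the obstacle you flag by assigning the random vectors within each cluster in a \emph{staircase} repetition pattern $s_2^{(n)}+1,\,s_2^{(n)}+2,\dots,N_2^{(n)}$, so that after factoring out a square Vandermonde block the remaining coefficient matrix is upper triangular with nonzero diagonal, giving the required full row rank deterministically rather than via a marginal-uniformity argument.
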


\begin{proof}
    The optimality follows directly from Theorem \ref{thm:main}, since in that Theorem $m^{(n)}_2=r_{2,n}-2a^{(n)}_2-s^{(n)}_2$, which would be equal to $1$ for all values of $n \in [N_1]$, under the system parameters considered here. Since we have an additional privacy constraint compared to the previous setting, but the loads have not increased for the given parameters. The achievability proof is provided in Subsection \ref{sec:achprivacy} and the privacy guarantee is proven in Subsection \ref{sec:proofprivacy}.
\end{proof}
\subsection{Motivating Example} \label{sec:exmpprivacy}
   In this subsection, we explain the proposed scheme through an example. Fig.~\ref{fig:diagram exmp2} illustrates a hierarchical setting consisting of $N_1=4$ clusters, each containing $3$ workers, i.e., $N_2^{(1)}=N_2^{(2)}=N_2^{(3)}=3$. The dataset is partitioned into $K=4$ equal-sized datasets, and the data assignment is performed such that  $r_1=2$, and  $r_{2,i}=2$ for $i\in[4]$. 

\begin{figure}
\centering
\resizebox{0.7\columnwidth}{!}{%
\begin{tikzpicture}[every node/.style={font=\small}]

\def\xspacing{2}

\foreach \i [evaluate=\x as \x using int((\i-1)*\xspacing)] in {1,2,3,4} {
    \node[circle, draw, minimum size=0.3cm, inner sep=1pt, font=\tiny, scale=0.7, fill=black!5] (W\i1) at (\x-0.6, 0) {$W_{\i,1}$};
    \node[circle, draw, minimum size=0.3cm, inner sep=1pt, font=\tiny, scale=0.7, fill=black!5] (W\i2) at (\x, 0) {$W_{\i,2}$};
    \node[circle, draw, minimum size=0.3cm, inner sep=1pt, font=\tiny, scale=0.7, fill=black!5] (W\i3) at (\x+0.6, 0) {$W_{\i,3}$};

    \node[draw, rectangle, minimum width=1.4cm, minimum height=0.6cm, fill=black!5] (R\i) at (\x, -1.2) {Relay \i};

    \foreach \j in {1,2,3} {
        \draw (W\i\j) -- (R\i);
    }

    \node[draw, dashed, fit={(W\i1) (W\i2) (W\i3) (R\i)}, label=above:Cluster \i] {};
}

\coordinate (ServerCoord) at ($ (R1)!0.5!(R4) + (0,-2) $);

\node[draw, rectangle, minimum width=1.4cm, minimum height=0.6cm, fill=black!5] (Server) at (ServerCoord) {Server};

\foreach \i in {1,2,3,4} {
    \draw (R\i) -- (Server);
}

\end{tikzpicture}
}
\caption{Structure of the motivating example.}
\label{fig:diagram exmp2}
\end{figure}

    Consider the dataset assignment is performed as follows.
    \begin{align*} 
        \Gamma_{(1,1)} &= \{\mathcal{D}_1, \mathcal{D}_3\},
        \Gamma_{(1,2)} = \{\mathcal{D}_1\},
        \Gamma_{(1,3)} = \{\mathcal{D}_3\}, \\
        \Gamma_{(2,1)} &= \{\mathcal{D}_1, \mathcal{D}_4\},
        \Gamma_{(2,2)} = \{\mathcal{D}_1\},
        \Gamma_{(2,3)} = \{\mathcal{D}_4\}, \\
        \Gamma_{(3,1)} &= \{\mathcal{D}_2, \mathcal{D}_3\},
        \Gamma_{(3,2)} = \{\mathcal{D}_2\},
        \Gamma_{(3,3)} = \{\mathcal{D}_3\}, \\
        \Gamma_{(4,1)} &= \{\mathcal{D}_2, \mathcal{D}_4\},
        \Gamma_{(4,2)} = \{\mathcal{D}_2\},
        \Gamma_{(4,3)} = \{\mathcal{D}_4\}, \\
    \end{align*}
    which leads that
    \begin{align*}
        & \Gamma_{1} = \{\mathcal{D}_1, \mathcal{D}_3\}, \\
        & \Gamma_{2} = \{\mathcal{D}_1, \mathcal{D}_4\}, \\
        & \Gamma_{3} = \{\mathcal{D}_2, \mathcal{D}_3\}, \\
        & \Gamma_{4} = \{\mathcal{D}_2, \mathcal{D}_4\}.
    \end{align*}

    Besides the assignment of datasets,  the three random vectors assigned to the workers, as defined in Subsection \ref{sec:problemsettingwithprivacy}, are as follows.
    \begin{align*} 
        \Gamma'_{(1,1)} &= \{\mathbf{z}_2, \mathbf{z}_3\},
        \Gamma'_{(1,2)} = \{\mathbf{z}_1, \mathbf{z}_2, \mathbf{z}_3\},
        \Gamma'_{(1,3)} = \{\mathbf{z}_1, \mathbf{z}_3\}, \\
        \Gamma'_{(2,1)} &= \{\mathbf{z}_1, \mathbf{z}_2\},
        \Gamma'_{(2,2)} = \{\mathbf{z}_1, \mathbf{z}_2\},
        \Gamma'_{(2,3)} = \{\mathbf{z}_2\}, \\
        \Gamma'_{(3,1)} &= \{\mathbf{z}_1, \mathbf{z}_3\},
        \Gamma'_{(3,2)} = \{\mathbf{z}_1, \mathbf{z}_3\},
        \Gamma'_{(3,3)} = \{\mathbf{z}_3\}, \\
        \Gamma'_{(4,1)} &= \{\mathbf{z}_2, \mathbf{z}_3\},
        \Gamma'_{(4,2)} = \{\mathbf{z}_2, \mathbf{z}_3\},
        \Gamma'_{(4,3)} = \{ \mathbf{z}_3\}.
    \end{align*}
    This leads to the following cluster-wise assignments:
    \begin{align*}
        & \Gamma'_{1} = \{\mathbf{z}_1, \mathbf{z}_2, \mathbf{z}_3\}, \\
        & \Gamma'_{2} = \{\mathbf{z}_1, \mathbf{z}_2\}, \\
        & \Gamma'_{3} = \{\mathbf{z}_1, \mathbf{z}_3\}, \\
        & \Gamma'_{4} = \{\mathbf{z}_2, \mathbf{z}_3\}.
    \end{align*}

    We assume there are no adversaries, but a single straggler among relays and in each cluster; i.e., $a_1=0, s_1=1$ and $s_2^{(i)}=1$ for $i\in[4]$. In addition, note that in the privacy-preserving scenario, no adversaries inside clusters are allowed, i.e., $a_2^{(i)}=0$ for $i\in[4]$. This gives $m_1=1$ and $m_2^{(i)}=1$ for $i\in[4]$, meaning the partial gradients and random vectors will not be partitioned. 
    Our focus in this example is to demonstrate how privacy is achieved.
    

    The workers in Cluster 1 compute $\Tilde{\mathbf{g}}_{(1,1)}^{\pi}$, $\Tilde{\mathbf{g}}_{(1,2)}^{\pi}$ and $\Tilde{\mathbf{g}}_{(1,3)}^{\pi}$ as follows,
    \begin{align*}
        \Tilde{\mathbf{g}}_{(1,1)}^{\pi} &\defeq \frac{1}{3} \mathbf{g}_{1} + \frac{3}{16} \mathbf{g}_{3} + \frac{4}{9} \mathbf{z}_{2} + \frac{1}{2} \mathbf{z}_{3}, \\
        \Tilde{\mathbf{g}}_{(1,2)}^{\pi} &\defeq \frac{1}{6} \mathbf{g}_{1} - \frac{3}{4} \mathbf{z}_{1} + \frac{2}{9} \mathbf{z}_{2} + \frac{1}{2}\mathbf{z}_{3}, \\
        \Tilde{\mathbf{g}}_{(1,3)}^{\pi} &\defeq -\frac{3}{16} \mathbf{g}_{3} - \frac{3}{2} \mathbf{z}_{1} + \frac{1}{2}\mathbf{z}_{3}.
    \end{align*}
    Assuming $W_{(1,3)}$ is a straggler, Relay 1 receives two computations from $W_{(1,1)}$ and $W_{(1,2)}$. With these two computations, Relay 1 can interpolate the following polynomial, referred to as the Intra-Cluster polynomial of Cluster 1. 
    \begin{align*}
        {\mathbf{f}_{\text{intraC}}^{\pi}}^{(1)}(x_2)&\defeq
         \mathbf{g}_{1} \frac{x_2-3}{-6} + \mathbf{g}_{3} \frac{3(x_2-2)}{-16} + \mathbf{z}_{1} \frac{3(x_2-1)}{-4} + \mathbf{z}_{2} \frac{2(x_2-3)}{-9} + \mathbf{z}_{3} \frac{1}{2}.
    \end{align*}
    This is possible since ${\mathbf{f}_{\text{intraC}}^{\pi}}^{(1)}(x_2)$ is of degree 1, and the computations of the $W_{(1,1)}$ and $W_{(1,2)}$ are in fact ${\mathbf{f}_{\text{intraC}}^{\pi}}^{(1)}(\alpha_1^{(1)})$ and ${\mathbf{f}_{\text{intraC}}^{\pi}}^{(1)}(\alpha_2^{(1)})$ respectively, where $\alpha_1^{(1)}=1$ and $\alpha_2^{(1)}=2$.

    The reason Relay 1 cannot obtain any information about the partial gradients is as follows. Consider the random vector component of the two received computations, which can be expressed as:
    \begin{align*}
        \begin{bmatrix}
        0 & \frac{4}{9} & \frac{1}{2} \\
        -\frac{3}{4} & \frac{2}{9} & \frac{1}{2}
    \end{bmatrix}
    \begin{bmatrix}
        \mathbf{z}_{1} \\
        \mathbf{z}_{2} \\
        \mathbf{z}_{3}
    \end{bmatrix},
    \end{align*}
    where each row corresponds to a worker. Since the coefficient matrix has rank 2, there are no non-trivial linear combinations of the rows that result in the zero vector.
    This implies that any linear combination of the worker computations will contain a non-zero linear combination of the random vectors. Therefore, Relay 1 cannot recover any information about the partial gradients or their linear combinations.

    For Cluster 2, the workers compute $\Tilde{\mathbf{g}}_{(2,1)}^{\pi}$, $\Tilde{\mathbf{g}}_{(2,2)}^{\pi}$ and $\Tilde{\mathbf{g}}_{(2,3)}^{\pi}$ as follows, 
    \begin{align*}
        \Tilde{\mathbf{g}}_{(2,1)}^{\pi} &\defeq \frac{1}{9} \mathbf{g}_{1} - \frac{1}{6} \mathbf{g}_{4} + \frac{2}{3} \mathbf{z}_{1} + \frac{2}{3} \mathbf{z}_{2}, \\
        \Tilde{\mathbf{g}}_{(2,2)}^{\pi} &\defeq \frac{1}{18} \mathbf{g}_{1} + \frac{1}{3} \mathbf{z}_{1} + \frac{2}{3} \mathbf{z}_{2}, \\
        \Tilde{\mathbf{g}}_{(2,3)}^{\pi} &\defeq \frac{1}{6} \mathbf{g}_{4} + \frac{2}{3} \mathbf{z}_{2}.
    \end{align*}
    Assuming $W_{(2,3)}$ as a straggler, Relay 2 receives two computations from $W_{(2,1)}$ and $W_{(2,2)}$. With these two computations, Relay 2 can interpolate the following Intra-Cluster polynomial, 
    \begin{align*}
        {\mathbf{f}_{\text{intraC}}^{\pi}}^{(2)}(x_2)&\defeq
         \mathbf{g}_{1} \frac{x_2-3}{-18} + \mathbf{g}_{4} \frac{x_2-2}{6} + \mathbf{z}_{1} \frac{x_2-3}{-3} + \mathbf{z}_{2} \frac{2}{3}.
    \end{align*}
    This is possible since ${\mathbf{f}_{\text{intraC}}^{\pi}}^{(2)}(x_2)$ is of degree 1, and the computations of the $W_{(2,1)}$ and $W_{(2,2)}$ are in fact ${\mathbf{f}_{\text{intraC}}^{\pi}}^{(2)}(\alpha_1^{(2)})$ and ${\mathbf{f}_{\text{intraC}}^{\pi}}^{(2)}(\alpha_2^{(2)})$ respectively, where $\alpha_1^{(2)}=1$ and $\alpha_2^{(2)}=2$.

    To prove privacy, we again consider the random vector component of these two computations, which can be written as 
    \begin{align*}
        \begin{bmatrix}
        \frac{2}{3} & \frac{2}{3} & 0 \\
        \frac{1}{3} & \frac{2}{3} & 0
    \end{bmatrix}
    \begin{bmatrix}
        \mathbf{z}_{1} \\
        \mathbf{z}_{2} \\
        \mathbf{z}_{3}
    \end{bmatrix},
    \end{align*}
    where each row corresponds to a worker. Again, this coefficient matrix has full row rank, which implies that Relay 2 cannot obtain any information about the partial gradients or their linear combinations. 

    The computations of workers in clusters 3 and 4 follow similarly. Therefore, we only present the Intra-Cluster polynomials for these two clusters.
    \begin{align*}
        {\mathbf{f}_{\text{intraC}}^{\pi}}^{(3)}(x_2)&\defeq
         \mathbf{g}_{2} \frac{x_2-3}{-3} + \mathbf{g}_{3} \frac{x_2-2}{16} + \mathbf{z}_{1} \frac{x_2-3}{-4} - \mathbf{z}_{3} \frac{3}{2}, \\
        {\mathbf{f}_{\text{intraC}}^{\pi}}^{(4)}(x_2)&\defeq
         \mathbf{g}_{2} \frac{x_2-3}{-1} + \mathbf{g}_{4} \frac{x_2-2}{-2} + \mathbf{z}_{2} \frac{4(x_2-3)}{9} - \mathbf{z}_{3}(4).
    \end{align*}

    After each relay recovers the corresponding Intra-Cluster polynomial, it computes its value at point $x_2=\beta_1^{(2)}=0$ and sends the result to the server. The computations of the relays are as follows,
    \begin{align*}
        \Tilde{\mathbf{g}}_{1}^{\pi} &\defeq \frac{1}{2} \mathbf{g}_{1} + \frac{3}{8} \mathbf{g}_{3} + \frac{3}{4} \mathbf{z}_{1} + \frac{2}{3} \mathbf{z}_{2} + \frac{1}{2} \mathbf{z}_{3}, \\
        \Tilde{\mathbf{g}}_{2}^{\pi} &\defeq \frac{1}{6} \mathbf{g}_{1} - \frac{1}{3} \mathbf{g}_{4} + \mathbf{z}_{1} + \frac{2}{3} \mathbf{z}_{2}, \\
        \Tilde{\mathbf{g}}_{3}^{\pi} &\defeq \mathbf{g}_{2} - \frac{1}{8} \mathbf{g}_{3} + \frac{3}{4} \mathbf{z}_{1} - \frac{3}{2} \mathbf{z}_{3}, \\
        \Tilde{\mathbf{g}}_{4}^{\pi} &\defeq 3\mathbf{g}_{2} + \mathbf{g}_{4} - \frac{4}{3} \mathbf{z}_{2} - 4 \mathbf{z}_{3},
    \end{align*}
    where $\Tilde{\mathbf{g}}_{n}^{\pi}$ is the computation of Relay $n$. In fact, these polynomials are designed to determine the coefficients of the linear encodings; for example one can easily verify that $\Tilde{\mathbf{g}}_{1}^{\pi}=2\Tilde{\mathbf{g}}_{(1,1)}^{\pi}-\Tilde{\mathbf{g}}_{(1,2)}^{\pi}$, $\Tilde{\mathbf{g}}_{2}^{\pi}=2\Tilde{\mathbf{g}}_{(2,1)}^{\pi}-\Tilde{\mathbf{g}}_{(2,2)}^{\pi}$.

    Assuming Relay 4 is a straggler, the server receives three computations from the first three relays. With these three computations, the server reconstructs the Cluster-to-Server polynomial $\mathbf{f}_{\text{C2S}}^{\pi}(x_1)$, which is
    \begin{align*}
        \mathbf{f}_{\text{C2S}}^{\pi}(x_1) &\defeq\mathbf{g}_{1} (\frac{x_1-3}{-3}) (\frac{x_1-4}{-4}) + \mathbf{g}_{2} (\frac{x_1-1}{-1}) (\frac{x_1-2}{-2}) + \mathbf{g}_{3} (\frac{x_1-2}{-2}) (\frac{x_1-4}{-4}) + \mathbf{g}_{4} (\frac{x_1-1}{-1}) (\frac{x_1-3}{-3}) \\
        &+\mathbf{z}_{1} (\frac{x_1-4}{-4})(x_1) + \mathbf{z}_{2} (\frac{x_1-3}{-3})(x_1) + \mathbf{z}_{3} (\frac{x_1-2}{-2})(x_1).
    \end{align*}
    This reconstruction is feasible because $\mathbf{f}_{\text{C2S}}^{\pi}(x_1)$ is a degree-2 polynomial and can be recovered by any three evaluations, and each message $\Tilde{\mathbf{g}}_{n}^{\pi}$ corresponds to $\mathbf{f}_{\text{C2S}}^{\pi}(x_1=n)$.
    In the end, the server recovers the aggregated  gradient vector as
    \begin{align*}
        \mathbf{g} \defeq \sum_{i=1}^4 \mathbf{g}_i = \mathbf{f}_{\text{C2S}}^{\pi}(\beta_1^{(1)}=0),
    \end{align*}
    which could also be written as $\mathbf{g}=3\Tilde{\mathbf{g}}_{1}^{\pi}-3\Tilde{\mathbf{g}}_{2}^{\pi}+\Tilde{\mathbf{g}}_{3}^{\pi}$.
\subsection{The Proposed Privacy-Preserving Hierarchical Gradient Coding Scheme} \label{sec:achprivacy}
 In this section, we describe our proposed scheme, which aims to keep the partial gradients and their linear combinations private against honest-but-curious relays. In this setting, for $r_1 \in [s_1+1:N_1-1]$, let $r'_1 = r_1+1$. We divide the partial gradients and the random vectors into $m_1$ equal-sized parts as $\mathbf{g}_k=(\mathbf{g}_k[1], ..., \mathbf{g}_k[m_1])$ and $\mathbf{z}_k=(\mathbf{z}_k[1], ..., \mathbf{z}_k[m_1])$, where $m_1=r_1-2a_1-s_1$.  The number of stragglers and adversaries among relays are denoted by $s_1$ and $a_1$, respectively. 
 Note that while the scheme tolerates stragglers among workers in Cluster $n$, denoted by $s_2^{(n)}$, it does not tolerate adversarial workers, i.e., $a_2^{(n)}=0$ for all $ n\in [N_1]$.

\subsubsection{Design of the Encoding Polynomials}
As before, the encoding process consists of two layers:  the Intra-Cluster part and the Cluster-to-Server part.
We begin by describing the design of the Cluster-to-Server encoding polynomial below.\\

\noindent\textbf{Design of the Cluster-to-Server Encoding: }  Let $\alpha_1^{(1)}, ..., \alpha_{N_1}^{(1)}$ and $\beta_1^{(1)}, ..., \beta_{m_1}^{(1)}$ be $N_1+m_1$ distinct values in  $\mathbb{F}_q$.
To design the Cluster-to-Server polynomial   $\mathbf{f}_{\text{C2S}}^{\pi}(x_1)$, we define the polynomials $\mathbf{p}_{l_1}(x_1)$ and $\mathbf{q}_{l_1}(x_1)$, corresponding to partial gradients and random vectors, respectively, as 
\begin{align}
    \mathbf{p}_{l_1}(x_1)=\sum_{k=1}^{K} \mathbf{g}_{k}[l_1] \prod^{N_1}_{\substack{i=1 \\ i:\mathcal{D}_{k}\notin\Gamma_{i}}} \frac{x_1-\alpha^{(1)}_{i}}{\beta^{(1)}_{l_1}-\alpha^{(1)}_{i}},\\
    \mathbf{q}_{l_1}(x_1)=\sum_{k=1}^{K'} \mathbf{z}_{k}[l_1] \prod^{N_1}_{\substack{i=1 \\ i:\mathbf{z}_{k}\notin\Gamma'_{i}}} \frac{x_1-\alpha^{(1)}_{i}}{\beta^{(1)}_{l_1}-\alpha^{(1)}_{i}},
\end{align}

   These polynomials have the property that $\mathbf{p}_{l_1}(\alpha_{n}^{(1)})$ and $\mathbf{q}_{l_1}(\alpha_{n}^{(1)})$ depend only on the partial gradients of datasets in $\Gamma_n$ and random vectors in $\Gamma'_n$, respectively. 
   The encoding polynomial $\mathbf{f}_{\text{C2S}}^{\pi}(x_1)$ is constructed as follows.
    \begin{align} \label{eq:outer2}
        \mathbf{f}_{\text{C2S}}^{\pi}(x_1)=&\sum_{l_1=1}^{m_1} \bigg(\mathbf{p}_{l_1}(x_1)  \prod^{m_1}_{\substack{u_1=1 \\ u_1 \neq l_1}} \frac{x_{1}-\beta^{(1)}_{u_1}}{\beta^{(1)}_{l_1}-\beta^{(1)}_{u_1}}\bigg) 
        +\sum_{l_1=1}^{m_1} \bigg(\mathbf{q}_{l_1}(x_1)  \prod^{m_1}_{\substack{u_1=1}} (x_{1}-\beta^{(1)}_{u_1})\bigg).
    \end{align}
    Note that $\mathbf{f}_{\text{C2S}}^{\pi}(x_1)$ is designed such that the second term in \eqref{eq:outer2} vanishes when evaluating at any $\beta_i^{(1)}$, for $i\in [m_1]$, i.e., 
\begin{align}
    \mathbf{f}_{\text{C2S}}^{\pi}(\beta_i^{(1)}) = \sum_{k=1}^K \mathbf{g}_k[i],
\end{align}
which ensures the correct aggregation of the 
$i$-th partition of the gradients at the server.

The degree of $\mathbf{f}_{\text{C2S}}^{\pi}(x_1)$ is at most $N_1-2a_1-s_1-1$, since 
\begin{itemize}
    \item $\deg(\mathbf{p}_{l_1}(x_1))=N_1-r_1$,  and it is multiplied by a degree $m_1-1=r_1-2a_1-s_1-1$ polynomial.
    \item $\deg(\mathbf{q}_{l_1}(x_1))=N_1-(r_1+1)$, and it is multiplied by a degree $m_1=r_1-2a_1-s_1$ polynomial.
\end{itemize}
The task of Relay $n$ is to compute and send 
\begin{align*}
\Tilde{\mathbf{g}}_n^{\pi}\defeq\mathbf{f}_{\text{C2S}}^{\pi}(\alpha_n^{(1)}) =&\sum_{l_1=1}^{m_1} \bigg(\mathbf{p}_{l_1}(\alpha_n^{(1)})  \prod^{m_1}_{\substack{u_1=1 \\ u_1 \neq l_1}} \frac{\alpha_n^{(1)}-\beta^{(1)}_{u_1}}{\beta^{(1)}_{l_1}-\beta^{(1)}_{u_1}}\bigg) 
        +&\sum_{l_1=1}^{m_1} \bigg(\mathbf{q}_{l_1}(\alpha_n^{(1)})  \prod^{m_1}_{\substack{u_1=1}} (\alpha_n^{(1)}-\beta^{(1)}_{u_1})\bigg).
\end{align*}
The length of $\Tilde{\mathbf{g}}_n^{\pi}$ is $\frac{d}{m_1}$, as it is a linear combination of vectors of that same length. Thus, the communication cost at this stage is $C_1^{\pi}=\frac{d}{m_1}$.

\noindent   \textbf{Design of the Intra-Cluster Encoding: } For each Cluster $n\in [N_1]$, we need a separate Intra-Cluster encoding polynomial accustomed to the corresponding Cluster-to-Server encoding $\Tilde{\mathbf{g}}_n$.  First, we need to impose certain constraints on the random vectors assigned to the workers, which we describe through the assignment and the total number of random vectors $K'$ as follows.

\textbf{Random Vectors Assignments. } To ensure proper assignments, we require that  $|\Gamma'_n|\geq N_2^{(n)}-s_2^{(n)}$. We set $r_{2,n}=r'_{2,n}=s_2^{(n)}+1$, which leads to $m_2^{(n)}=1$. This means that in Cluster $n$ we need  at least $N_2^{(n)}-s_2^{(n)}$ distinct random vectors, each repeated at least  $s_2^{(n)}+1$ times among the workers. Additionally, we assign random vectors such that there exists one with repetition 
$s_2^{(n)}+1$, another with repetition $s_2^{(n)}+2$, and so on, up to a vector with repetition  $N_2^{(n)}$.

\textbf{Requirements on $K'$. }
To determine the value of $K'$,  we have that the workers in Cluster $n$ must be assigned at least  $N_2^{(n)}-s_2^{(n)}$ distinct random vectors, which implies 
\begin{align}
   K'\geq \max_{n\in [N_1]} (N_2^{(n)}-s_2^{(n)}). 
\end{align}
But it may not be enough to just put $K'= \max_{n\in [N_1]} (N_2^{(n)}-s_2^{(n)})$ since the assignment should be such that the minimum number of repetitions of all random vectors among clusters is equal to $r'_1=r_1+1$. On the other hand, the total number of random vector copies assigned across all clusters is $\sum_{n\in [N_1]}|\Gamma'_n|$, and since each $\mathbf{z}_k$ is repeated at least $r'_1=r_1+1$ times among clusters, for any assignment we naturally get that
\begin{align}
    K' \leq \frac{\sum_{n\in [N_1]}|\Gamma'_n|}{r_1+1}.
\end{align}


\begin{claim}
There always exist parameters $K'$ and $|\Gamma'_n|$, for all $n \in [N_1]$  that satisfy the above requirements for any given values of $r_1, N_2^{(n)}$ and, $s_2^{(n)}$.

\end{claim}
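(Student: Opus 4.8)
The plan is to exhibit an explicit choice of $K'$ together with the cluster sizes $|\Gamma'_n|$ and an assignment of the $K'$ random vectors to the $N_1$ clusters that meets every requirement at once. Write $M_n\defeq N_2^{(n)}-s_2^{(n)}$ for the number of distinct random vectors Cluster $n$ must contain; note $M_n\geq 1$ since $s_2^{(n)}<N_2^{(n)}$. The two numerical constraints derived above are the lower bound $K'\geq \max_n M_n$ (the pool must supply the neediest cluster) and the counting bound $K'(r_1+1)\leq \sum_n|\Gamma'_n|$ (each vector must sit in at least $r_1+1$ clusters). The key observation is that $|\Gamma'_n|$ is only required to be at least $M_n$, so we are free to enlarge it up to $K'$; since $\sum_n|\Gamma'_n|$ can then be made as large as $N_1K'$ and the hypothesis $r_1\leq N_1-1$ gives $r_1+1\leq N_1$, the counting bound becomes compatible with the lower bound.

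Concretely, I would set
\begin{align*}
K'\defeq \max\Big\{ \max_{n\in[N_1]} M_n,\ \Big\lceil \frac{\sum_{n\in[N_1]} M_n}{r_1+1} \Big\rceil \Big\},
\end{align*}
so that both $K'\geq \max_n M_n$ and $K'(r_1+1)\geq \sum_n M_n$ hold. I would then pick cluster sizes $|\Gamma'_n|=d_n$ with $M_n\leq d_n\leq K'$ and $\sum_n d_n=K'(r_1+1)$; such integers exist because $K'(r_1+1)$ lies in the interval $[\sum_n M_n,\ N_1K']$, whose left endpoint is at most $K'(r_1+1)$ by the choice of $K'$ and whose right endpoint satisfies $N_1K'\geq (r_1+1)K'$.

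It remains to realize these degrees as an actual bipartite incidence between the $K'$ vectors and the $N_1$ clusters in which every vector meets exactly $r_1+1$ clusters and Cluster $n$ meets $d_n$ vectors. I would invoke the Gale--Ryser theorem: with the cluster degrees sorted as $d_{(1)}\geq\cdots\geq d_{(N_1)}$ and all vector degrees equal to $r_1+1$, the bigraphic condition $\sum_{i=1}^{k} d_{(i)}\leq K'\min(r_1+1,k)$ must hold for each $k$. For $k\geq r_1+1$ the right side equals $\sum_n d_n$ and dominates any partial sum, while for $k<r_1+1$ it equals $K'k\geq k\,d_{(1)}$ since every $d_n\leq K'$; hence the incidence exists. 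Every random vector is then repeated exactly $r_1+1$ times across clusters, giving $r'_1=r_1+1$, and $K'\leq \sum_n|\Gamma'_n|/(r_1+1)$ holds with equality. Finally, inside each cluster the $d_n\geq M_n$ assigned vectors are spread over the $N_2^{(n)}$ workers with per-vector multiplicities $s_2^{(n)}+1, s_2^{(n)}+2,\dots,N_2^{(n)}$ for a distinguished set of $M_n$ of them and multiplicity $s_2^{(n)}+1$ for any surplus vectors, realizing the staircase structure and $r'_{2,n}=s_2^{(n)}+1$; this intra-cluster placement is unobstructed because every multiplicity is at most $N_2^{(n)}$.

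I expect the main obstacle to be the bipartite realizability step: the numerical bounds on $K'$ are straightforward once one notices that $|\Gamma'_n|$ may exceed $M_n$, but converting a feasible pair of degree sequences into a genuine assignment, in which no vector is over- or under-used, requires the Gale--Ryser dominance check (equivalently, a max-flow/defect argument). The uniformity of the vector degrees is what makes this check go through cleanly, and $r_1\leq N_1-1$ is the precise hypothesis preventing the per-vector demand $r_1+1$ from exceeding the number of available clusters.
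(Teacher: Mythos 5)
Your proof is correct, but it takes a genuinely different route from the paper. The paper gives a greedy, stage-by-stage algorithm: clusters are sorted by decreasing $N_2^{(n)}-s_2^{(n)}$, and at each stage just enough fresh random vectors are introduced to cover the current cluster's deficit, each new vector then being spread to $r_1$ additional clusters so that it is repeated exactly $r_1+1$ times; the final $K'$ is whatever this process yields. You instead fix $K'$ up front by an explicit formula, choose a feasible degree sequence $(d_n)$ for the clusters, and invoke Gale--Ryser to realize the bipartite incidence with uniform vector degree $r_1+1$. Your approach buys an essentially minimal and closed-form $K'$, and it makes completely explicit why $r_1\le N_1-1$ is the right hypothesis (it is exactly the condition $r_1+1\le N_1$ that makes the dominance inequalities hold); it also supplies the realizability argument that the paper compresses into the phrase ``randomly assigned to other clusters,'' which implicitly relies on the same fact that $r_1+1$ distinct clusters are available for each vector. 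The paper's approach buys elementarity --- no appeal to Gale--Ryser or flows, just a direct constructive procedure producing the sets $\Gamma'_n$ --- at the cost of leaving $K'$ implicit and dependent on the random choices made along the way. Both arguments establish the claim, including the intra-cluster staircase of multiplicities $s_2^{(n)}+1,\dots,N_2^{(n)}$, which as you note is unobstructed since every multiplicity is at most $N_2^{(n)}$.
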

\begin{proof}
    We propose an algorithm that meets the above requirements with $K' = \frac{\sum_{n\in [N_1]}|\Gamma'_n|}{r_1+1}$ and $K'\geq \max_{n\in [N_1]} (N_2^{(n)}-s_2^{(n)})$ by designing $K'$ and $\Gamma'_n$, for all $ n\in [N_1]$.

The algorithm consists of multiple steps and in each step, the values $K'$ and $\Gamma'_1, ..., \Gamma'_{N_1}$ get updated. Let ${K'}^{(i)}$ and ${\Gamma'}^{(i)}_1, \cdots, {\Gamma'}^{(i)}_{N_1}$ denote these values at stage $i$ of the algorithm. 
   Assume, without loss of generality that the clusters are sorted based on decreasing order for values $N_2^{(n)}-s_2^{(n)}, \forall n\in [N_1]$. We initialize with
 ${K'}^{(0)}=0$ and set all initial assignment sets, i.e., ${\Gamma'}^{(0)}_1, \cdots, {\Gamma'}^{(0)}_{N_1}$, to the empty set $\{\}$.
  In the first stage, we set ${K'}^{(1)}=N_2^{(1)}-s_2^{(1)}$ and assign $\mathbf{z}_1, \cdots, \mathbf{z}_{K'^{(1)}}$ to the first cluster, leading to the updated assignment set ${\Gamma'}^{(1)}_1$; this satisfies $|{\Gamma'}^{(1)}_1|\geq N_2^{(1)}-s_2^{(1)}$. Each of the random vectors  is then randomly assigned to other clusters to ensure that their repetition reaches $r_1+1$. This leads to the updated assignment sets ${\Gamma'}^{(1)}_1, \cdots, {\Gamma'}^{(1)}_{N_1}$. 

  In the second stage, we proceed to Cluster 2.
 If $|{\Gamma'}^{(1)}_2|< N_2^{(2)}-s_2^{(2)}$, 
 we add additional random vectors such that
 ${K'}^{(2)}= {K'}^{(1)}+(N_2^{(2)}-s_2^{(2)}-|{\Gamma'}^{(1)}_2|)$ and assign $\mathbf{z}_{{K'}^{(1)}+1}, \cdots, \mathbf{z}_{{K'}^{(2)}}$ to Cluster 2. This leads to ${\Gamma'}^{(2)}_2$, which satisfies $|{\Gamma'}^{(2)}_2|\geq N_2^{(2)}-s_2^{(2)}$.
 These new vectors are then assigned randomly to other clusters until each one is repeated $r_1+1$ times. This results in updated sets ${\Gamma'}^{(2)}_1, \cdots, {\Gamma'}^{(2)}_{N_1}$.

 We repeat this process for each cluster. At stage $i$, we update the assignment sets ${\Gamma'}^{(i)}_1, \cdots, {\Gamma'}^{(i)}_{N_1}$ and the current total number of random  vectors ${K'}^{(i)}$. The algorithm terminates after stage 
 $N_1$, where we set
  $K'={K'}^{(N_1)}$ and ${\Gamma'}_{n}={\Gamma'}^{(N_1)}_{n}$, for all $n \in [N_1]$. 
  By construction, the conditions 
   $|\Gamma'_n|\geq N_2^{(n)}-s_2^{(n)}$, for all  $n\in [N_1]$, and $(r_1+1)K'=\sum_{n\in [N_1]}|\Gamma'_n|$ hold, since each $\mathbf{z}_k$ is repeated exactly $r_1+1$ times.
 Additionally, we have $K'\geq \max_{n\in [N_1]} (N_2^{(n)}-s_2^{(n)}) = N_2^{(1)}-s_2^{(1)}$.
\end{proof} 

Now we can proceed to the construction of the Intra-Cluster encoding polynomial.
Let $\alpha_1^{(2)}, ..., \alpha_{N_2^{(n)}}^{(2)}$ and $\beta_1^{(2)}$ be $N_2^{(n)}+1$ distinct values chosen from $\mathbb{F}_q$.
We first define the polynomials $\mathbf{p}^{(n)}(x_2)$ and $\mathbf{q}^{(n)}(x_2)$ as follows.
    \begin{align*}
        &\mathbf{p}^{(n)}(x_2)\defeq  \sum_{l_1=1}^{m_1}\hspace{-1mm}\left(\sum_{k=1}^{K} \mathbf{g}_{k}[l_1]\prod^{N_1}_{\substack{i=1 \\ i:\mathcal{D}_{k}\notin\Gamma_{i}}} \frac{\alpha_n^{(1)}-\alpha^{(1)}_{i}}{\beta^{(1)}_{l_1}-\alpha^{(1)}_{i}}
        \prod^{N_2^{(n)}}_{\substack{j=1 \\ j:\mathcal{D}_{k}\notin\Gamma_{(n,j)}}} \hspace{-2mm}\frac{x_2-\alpha^{(2)}_{j}}{\beta^{(2)}_1-\alpha^{(2)}_{j}} \right)  \prod^{m_1}_{\substack{u_1=1 \\ u_1 \neq l_1}} \frac{\alpha_1^{(1)}-\beta^{(1)}_{u_1}}{\beta^{(1)}_{l_1}-\beta^{(1)}_{u_1}}, \\
        &\mathbf{q}^{(n)}(x_2)\defeq  \sum_{l_1=1}^{m_1}\hspace{-1mm}\left(\sum_{k=1}^{K'} \mathbf{z}_{k}[l_1]\prod^{N_1}_{\substack{i=1 \\ i:\mathbf{z}_{k}\notin\Gamma'_{i}}} \frac{\alpha_n^{(1)}-\alpha^{(1)}_{i}}{\beta^{(1)}_{l_1}-\alpha^{(1)}_{i}}
        \prod^{N_2^{(n)}}_{\substack{j=1 \\ j:\mathbf{z}_{k}\notin\Gamma'_{(n,j)}}} \hspace{-2mm}\frac{x_2-\alpha^{(2)}_{j}}{\beta^{(2)}_{1}-\alpha^{(2)}_{j}} \right)  \prod^{m_1}_{\substack{u_1=1}} (\alpha_1^{(1)}-\beta^{(1)}_{u_1}).
    \end{align*}
The Intra-Cluster encoding polynomial for Cluster $n$, denoted by ${\mathbf{f}_{\text{intraC}}^{\pi}}^{(n)}(x_2)$, is defined as follows. 
    \begin{align} \label{eq:fintraprivacy}
        {\mathbf{f}_{\text{intraC}}^{\pi}}^{(n)}(x_2) =
        \mathbf{p}^{(n)}(x_2) + \mathbf{q}^{(n)}(x_2).
    \end{align}
Note that ${\mathbf{f}_{\text{intraC}}^{\pi}}^{(n)}(x_2)$ has  degree $N_2^{(n)}-s_2^{(n)}-1$, since both $\mathbf{p}^{(n)}(x_2)$ and $\mathbf{q}^{(n)}(x_2)$ are polynomials of degree $N_2^{(n)}-r_{2,n}=N_2^{(n)}-s_2^{(n)}-1$.

The task of each worker $W_{(n,j)}$ is to send a linear combination of its computed partial gradients and assigned random vectors, denoted by $\Tilde{\mathbf{g}}_{(n,j)}^{\pi}$, to Relay $n$  which is, in fact, the evaluation of ${\mathbf{f}_{\text{intraC}}^{\pi}}^{(n)}(x_2)$ at $\alpha_j^{(2)}$; that is,
    \begin{align} \label{eq:workertaskprivacy}
        \Tilde{\mathbf{g}}_{(n,j)}^{\pi} \defeq {\mathbf{f}_{\text{intraC}}^{\pi}}^{(n)}(\alpha_j^{(2)})= \mathbf{p}^{(n)}(\alpha_j^{(2)}) + \mathbf{q}^{(n)}(\alpha_j^{(2)}),
    \end{align}
The length of this vector is $\frac{d}{m_1}$, as it results from a linear combination of partitions of partial gradients and random vectors, each of length $\frac{d}{m_1}$; therefore, $C_2^{\pi^{(n)}}=\frac{d}{m_1}$.

\subsubsection{Privacy-Preserving Hierarchical Gradient Coding: Step-by-Step Algorithm}
   With the provided description, the proposed privacy-preserving scheme proceeds as follows:
\begin{enumerate}
    \item \textbf{Worker-to-Relay Transmission.} $W_{(n,j)}$ computes  $\Tilde{\mathbf{g}}_{(n,j)}^{\pi}$ as defined in \eqref{eq:workertaskprivacy}, and sends the result to Relay $n$. 
    \item \textbf{Calculating Intra-Cluster Polynomial by the Relay.} Upon receiving $\Tilde{\mathbf{g}}_{(n,j)}^{\pi}$ from non-straggling workers, Relay $n$ recovers ${\mathbf{f}_{\text{intraC}}^{\pi}}^{(n)}(x_2)$, which is of degree $N_2^{(n)}-s_2^{(n)}-1$. 
    \item \textbf{Relay-to-Server Transmission.} Relay $n$ computes $\Tilde{\mathbf{g}}_{n}^{\pi} = \mathbf{f}_{\text{C2S}}^{\pi}(\alpha_n^{(1)}) = {\mathbf{f}_{\text{intraC}}^{\pi}}^{(n)}(\beta_1^{(2)})$ and sends it to the server.
    \item  \textbf{Calculating Cluster-to-Server Polynomial by the Server.} The server, having received $\Tilde{\mathbf{g}}_{n}^{\pi}$ from non-straggling relays, recovers $\mathbf{f}_{\text{C2S}}^{\pi}(x_1)$,  which is of degree  $N_1-2a_1-s_1-1$. 
    \item \textbf{Final Gradient Recovery.}  The server calculates the aggregated partial gradients ${\mathbf{g}} = ({\mathbf{g}}[1],...,{\mathbf{g}}[m_1]) = (\mathbf{f}_{\text{C2S}}^{\pi}(\beta_1^{(1)}), ..., \mathbf{f}_{\text{C2S}}^{\pi}(\beta_{m_1}^{(1)}))$.
    
\end{enumerate}
 

\begin{remark}
    The proposed privacy-preserving hierarchical gradient coding scheme operates for values of $r_1 \in [s_1+1:N_1-1]$ and $r_{2,n}=s_2^{(n)}+1$,  rather than the full possible ranges $r_1 \in [s_1+1:N_1]$ and $r_{2,n} \in [s_2^{(n)}+1:N_2^{(n)}]$. Exploring these broader ranges is left for future work.
\end{remark}

\begin{remark}
    The proposed scheme achieves the optimal communication loads under the specified system parameters. It demonstrates that enforcing the privacy constraint does not require increasing the communication load compared to the non-private setting.
\end{remark}

\begin{remark}
In the proposed privacy-preserving scheme, we assume that in Cluster $n$, exactly $s^{(2)}_{n}$ links fail to deliver any computation to the corresponding relay. The information-theoretic definition of privacy in this setting, as defined in \eqref{eq:privacy}, also reflects this assumption.
\end{remark}
   

\subsection{Proof of Privacy} \label{sec:proofprivacy}

Consider Cluster $n$. The relay receives $N_2^{(n)}-s_2^{(n)}$ computations from the non-straggling workers. The computation of $W_{(n,j)}$, denoted by $\Tilde{\mathbf{g}}_{(n,j)}^{\pi}$, and computed as in \eqref{eq:workerencprivacy}, corresponds to the evaluation of the polynomial  ${\mathbf{f}_{\text{intraC}}^{\pi}}^{(n)}(x_2)$ at point $x_2=\alpha_j^{(2)}$. This polynomial can be expressed as a vector-matrix product,
\begin{align}
    {\mathbf{f}_{\text{intraC}}^{\pi}}^{(n)}(x_2) = [\mathbf{b}(x_2) \text{ }\mathbf{a}(x_2)] \begin{bmatrix}
               \mathbf{g}_{1}[1] \\
               \vdots \\
               \mathbf{g}_{K}[m_1] \\
               \mathbf{z}_{1}[1] \\
               \vdots \\
               \mathbf{z}_{K'}[m_1] \\
            \end{bmatrix},
\end{align}
where $\mathbf{b}(x_2)= [b_1(x_2), ..., b_{Km_1}(x_2)]$ is the coefficient vector of length $Km_1$ associated with the partial gradients, and $\mathbf{a}(x_2)= [a_1(x_2), ..., a_{K'm_1}(x_2)]$ is the coefficient vector of length $K'm_1$ associated with the random vectors. 

Assuming, without loss of generality, that the first $N_2^{(n)}-s_2^{(n)}$ workers are the non-stragglers, we can write
\begin{align*}
    &\begin{bmatrix}
        \Tilde{\mathbf{g}}_{(n,1)}^{\pi} \\
        \vdots \\
        \Tilde{\mathbf{g}}_{(n,N_2^{(n)}-s_2^{(n)})}^{\pi} \\
    \end{bmatrix} =  \begin{bmatrix}
        \mathbf{b}(\alpha_1^{(2)}) & \mathbf{a}(\alpha_1^{(2)}) \\
        \vdots & \vdots \\
        \mathbf{b}(\alpha_{N_2^{(n)}-s_2^{(n)}}^{(2)}) & \mathbf{a}(\alpha_{N_2^{(n)}-s_2^{(n)}}^{(2)}) \\
    \end{bmatrix}
    \begin{bmatrix}
        \mathbf{g}_{1}[1] \\
        \vdots \\
        \mathbf{g}_{K}[m_1] \\
        \mathbf{z}_{1}[1] \\
        \vdots \\
        \mathbf{z}_{K'}[m_1] \\
    \end{bmatrix}.
\end{align*}
This can be separated into two matrix products
\begin{align*}
    &\begin{bmatrix}
        \Tilde{\mathbf{g}}_{(n,1)}^{\pi} \\
        \vdots \\
        \Tilde{\mathbf{g}}_{(n,N_2^{(n)}-s_2^{(n)})}^{\pi} \\
    \end{bmatrix} = 
    \begin{bmatrix}
        \mathbf{b}(\alpha_1^{(2)})\\
        \vdots \\
        \mathbf{b}(\alpha_{N_2^{(n)}-s_2^{(n)}}^{(2)}) \\
    \end{bmatrix}
    \begin{bmatrix}
        \mathbf{g}_{1}[1] \\
        \vdots \\
        \mathbf{g}_{K}[m_1] \\
    \end{bmatrix}  +
    \begin{bmatrix}
        a_1(\alpha_{1}^{(2)})& \cdots & a_{K'm_1}(\alpha_{1}^{(2)})\\
        &\vdots & \\
        a_1(\alpha^{(2)}_{N_2^{(n)}-s_2^{(n)}})& \cdots & a_{K'm_1}(\alpha^{(2)}_{N_2^{(n)}-s_2^{(n)}})\\
    \end{bmatrix}
    \begin{bmatrix}
        \mathbf{z}_{1}[1] \\
        \vdots \\
        \mathbf{z}_{K'}[m_1] \\
    \end{bmatrix}
\end{align*}

To satisfy the privacy constraint in \eqref{eq:privacy}, it suffices to show that the matrix of coefficients corresponding to the random vectors—denoted by $\mathbf{A}$—has full row rank. This ensures that the contribution of the random vectors fully masks the private information.

Each $a_i(x_2)$ can be written as
\begin{align}
    a_i(x_2) = \mathbf{x} \mathbf{a}_i, 
\end{align}
where $\mathbf{x} = (1,x_2,x_2^2,\cdots,x_2^{N_2^{(n)}-s_2^{(n)}-1})$ is a vector of monomials, and $\mathbf{a}_i~=\begin{bmatrix}
        a_{i,1} \\
        \vdots \\
        a_{i,N_2^{(n)}-s_2^{(n)}} \\
    \end{bmatrix}$ is the coefficient vector for the $i$-th partition of the random vectors, which is of length $N_2^{(n)}-s_2^{(n)}$. This is because ${\mathbf{f}_{\text{intraC}}^{\pi}}^{(n)}(x_2)$ defined in \eqref{eq:fintraprivacy} is of degree $N_2^{(n)}-s_2^{(n)}-1$. Using this, we can further decompose $\mathbf{A}$ as 
\begin{align*}
    &\mathbf{A} = \begin{bmatrix}
        1& \alpha_1^{(2)} & (\alpha_1^{(2)})^{2} & \cdots & (\alpha_1^{(2)})^{N_2^{(n)}-s_2^{(n)}-1}\\
        1& \alpha_2^{(2)} & (\alpha_2^{(2)})^{2} & \cdots & (\alpha_2^{(2)})^{N_2^{(n)}-s_2^{(n)}-1}\\
        \vdots&\vdots & \vdots &\vdots&\vdots \\
        1& \alpha_{N_2^{(n)}-s_2^{(n)}}^{(2)} & (\alpha_{N_2^{(n)}-s_2^{(n)}}^{(2)})^{2} & \cdots & (\alpha_{N_2^{(n)}\hspace{-1mm}-s_2^{(n)}}^{(2)})^{N_2^{(n)}\hspace{-1mm}-s_2^{(n)}\hspace{-1mm}-1}\\
    \end{bmatrix}[\mathbf{a}_1, \mathbf{a}_2, \cdots, \mathbf{a}_{K'm_1}].
\end{align*}
The first matrix is the transpose of a Vandermonde matrix with dimension $N_2^{(n)}-s_2^{(n)} \times N_2^{(n)}-s_2^{(n)}$, which is full rank since the evaluation points $\{\alpha_j^{(2)}\}_{j=1}^{N_2^{(n)}-s_2^{(n)}}$ are distinct. Thus, to prove that $\mathbf{A}$ has full row rank, it suffices to  find $N_2^{(n)}-s_2^{(n)}$ columns from $[\mathbf{a}_1, \mathbf{a}_2, \cdots, \mathbf{a}_{K'm_1}]$, or in other words $N_2^{(n)}-s_2^{(n)}$ coefficient vectors $\mathbf{a}_i$ that are linearly independent. 

Remember that in the proposed scheme, we have $|\Gamma'_n|\geq N_2^{(n)}-s_2^{(n)}$; meaning there are at least $N_2^{(n)}-s_2^{(n)}$ distinct random vectors in the cluster. Moreover, there is one random vector with repetition $s_2^{(n)}+1$, one with repetition $s_2^{(n)}+2$, and so on, up to  a random vector with repetition $N_2^{(n)}$. Without loss of generality, assume these random vectors are labeled as $\mathbf{z}_1, \mathbf{z}_2, \cdots, \mathbf{z}_{N_2^{(n)}-s_2^{(n)}}$.
From the construction of ${\mathbf{f}_{\text{intraC}}^{\pi}}^{(n)}(x_2)$ in \eqref{eq:fintraprivacy}, we infer that the degree of the coefficient of $\mathbf{z}_{k}[l_1]$ is equal to the number of workers $(n,j)$ in Cluster $n$ for which $\mathbf{z}_{k} \notin \Gamma'_{(n,j)}$. Therefore since the number of workers $(n,j)$ for which $\mathbf{z}_{1} \in \Gamma'_{(n,j)}$ is $s_2^{(n)}+1$, the coefficient polynomial for partitions of $\mathbf{z}_1$ would be of degree $N_2^{(n)}-s_2^{(n)}-1$. This means the coefficient vectors $\mathbf{a}_1, \cdots, \mathbf{a}_{m_1}$ corresponding to $\mathbf{z}_1[1], \cdots, \mathbf{z}_1[m_1]$, would have non-zero values for the $(N_2^{(n)}-s_2^{(n)})$-th position. Similarly, for partitions of $\mathbf{z}_2$, since the number of repetitions is $s_2^{(n)}+2$, the coefficients would be of degree $N_2^{(n)}-s_2^{(n)}-2$. This means the coefficient vectors $\mathbf{a}_{m_1+1}, \cdots, \mathbf{a}_{2m_1}$ corresponding to $\mathbf{z}_2[1], \cdots, \mathbf{z}_2[m_1]$, would have zero values for the $(N_2^{(n)}-s_2^{(n)})$-th position but non-zero values for $(N_2^{(n)}-s_2^{(n)}-1)$-th position. Proceeding similarly, coefficient vectors $\mathbf{a}_{2m_1+1}, \cdots, \mathbf{a}_{3m_1}$ corresponding to $\mathbf{z}_3[1], \cdots, \mathbf{z}_3[m_1]$, would have zero values for the $(N_2^{(n)}-s_2^{(n)})$-th and $(N_2^{(n)}-s_2^{(n)}-1)$-th position, but non-zero values for the $(N_2^{(n)}-s_2^{(n)}-2)$-th position. This continues until we have that the coefficient vectors $\mathbf{a}_{(N_2^{(n)}-s_2^{(n)}-1)m_1+1}, \cdots, \mathbf{a}_{(N_2^{(n)}-s_2^{(n)})m_1}$ corresponding to $\mathbf{z}_{N_2^{(n)}-s_2^{(n)}}[1], \cdots, \mathbf{z}_{N_2^{(n)}-s_2^{(n)}}[m_1]$, have non-zero values for the first position, but zero values elsewhere. To find $N_2^{(n)}-s_2^{(n)}$ coefficient vectors which are linearly independent we can consider anything of the form
\begin{align}
    \left[\mathbf{a}_{i_1}, \mathbf{a}_{m_1+i_2}, \cdots, \mathbf{a}_{(N_2^{(n)}-s_2^{(n)}-1)m_1+i_{N_2^{(n)}-s_2^{(n)}}}\right],
\end{align}
where $i_1, i_2,\cdots, i_{N_2^{(n)}-s_2^{(n)}} \in [1:m_1]$. This leads to an upper triangular matrix with non-zero values on the diagonal, therefore being linearly independent. 
\section{Conclusion}
In this paper, we studied the hierarchical gradient coding problem with the goal of increasing the scalability of gradient coding by reducing the bandwidth requirement at the server. We first derived the optimal communication-computation trade-off by proposing a converse and an achievable scheme. Notice that the tightness of the converse in under the assumption of linear encoding functions. The achievable scheme is constructed through the careful design of two layers of polynomials, which determine the encoding and decoding tasks at the nodes. Each worker's task is designed not only to minimize the worker-to-relay communication load but also to enable each relay to minimize the relay-to-server communication. Each relay is required to process the data rather than simply forwarding the messages received from the workers. This processing is enabled by the layered design of the polynomials. We concluded that this new hierarchical topology can significantly reduce the bandwidth requirement at the server. We then extended the problem to a privacy-preserving hierarchical gradient coding scenario, where the information on partial gradients is kept private from the relays. Interestingly, we showed that, for a wide range of system parameters, this is possible without increasing the communication cost. The methodology for determining the encoding and decoding functions is again based on designing polynomials using partial gradients and random vectors. This ensures that the messages sent by the workers to the relays always include some randomly generated vectors—privately generated by the workers—such that the relays cannot infer any information about the partial gradients upon receiving the workers' messages.

\bibliographystyle{IEEEtran}
 \bibliography{references}
 	

\end{document}